\newcommand{\eat}[1]{}
\newcommand{\zf}[1]{{\color{Blue} \emph{#1}}}
\newcommand{\mn}[1]{\ensuremath{\mathnormal{#1}}}
\newcommand{\mnth}[1]{\ensuremath{\mathnormal{#1}\cdot}th}
\newcommand\method[1]{{\mbox{\sf\small{#1}}}}
\newcommand{\knn}{\method{$k$NN}\xspace}
\newcommand{\rknn}{\method{R$k$NN}\xspace}
\newcommand{\AP}{\method{AP}\xspace}
\newcommand{\BS}{\method{BS}\xspace}
\newcommand{\Auss}{\method{Aus}\xspace}
\newcommand{\Foursq}{\method{Foursq}\xspace}
\newcommand{\cmpo}{\method{T$m\rho$Q}\xspace}
\newcommand{\tmpo}{\method{T$m\rho$Q}\xspace}
\newcommand{\irr}{\method{IRF}\xspace}
\newcommand{\var}[1]{\mbox{\emph{#1}}}
\def\D{\hphantom{1}}
\newcommand\gb[1]{$#1$\,GB}
\newcommand{\myurl}[1]{{\url{#1}}}
\newcommand{\myparagraph}[1]{\vspace*{-8.4pt}\paragraph*{\normalsize\bf{#1}}}
\newcommand{\noi}{\noindent}
\newcommand{\mycomment}[1]{}
\newcommand{\bigo}{\ensuremath{\mathscr{O}}}
\newcommand{\constraint}{\ensuremath{{\mbox{\em Con}}}}
\newcommand{\dist}{\ensuremath{{\bm{d}}}}
\newcommand{\rank}{\ensuremath{{\bm{r}}}}
\newcommand{\popularity}{\ensuremath{{\bm{\rho}}}}
\newcommand{\res}{\ensuremath{\mathscr{R}}}
\newcommand{\maxdist}{\ensuremath{{\bm{d^{\,\uparrow}}}}}
\newcommand{\mindist}{\ensuremath{{\bm{d^{\,\downarrow}}}}}
\newcommand{\ar}{\ensuremath{{\bm{\hat{r}}}}}
\newcommand{\ap}{\ensuremath{{\bm{\hat{\rho}}}}}
\newcommand{\gain}{\Delta}
\newcommand{\objgain}{\Delta_o}
\newcommand{\blkgain}{\Delta_b}
\newcommand{\maxobjgain}{\ensuremath{\Delta_o^{\,\uparrow}}}
\newcommand{\maxblkgain}{\ensuremath{\Delta_b^{\,\uparrow}}}
\newcommand{\contribution}{\zeta}
\newcommand{\ac}{\zeta}
\newcommand{\popqo}{\ap(o_m,W_{i-1}\backslash qo)}
\newcommand{\popmth}{\ap(o_m,W_i)}
\newcommand{\popmplusoneth}{\ap(o_{m+1},W_{i-1})}
\newcommand{\LR}{\ensuremath{{\bm{r^{\,\downarrow}}}}}
\newcommand{\UR}{\ensuremath{{\bm{r^{\,\uparrow}}}}}
\newcommand{\SR}{\ensuremath{\var{OSR}}} 
\newcommand{\BSR}{\ensuremath{\var{BSR}}}
\newcommand{\OSR}{\ensuremath{\var{OSR}}}
\newcommand{\LBR}{\ensuremath{{\bm{\hat{r}^{\,\downarrow}}}}}
\newcommand{\UBR}{\ensuremath{{\bm{\hat{r}^{\,\uparrow}}}}}
\newlength{\onedigit}
\newtheorem{example}{Example}
\newtheorem{mydefinition}{Definition}
\newtheorem{lemma}{Lemma}
\newtheorem{corollary}{Corollary}
\begin{document}

\setcopyright{acmcopyright}

\doi{10.475/123_4}

\isbn{123-4567-24-567/08/06}

\conferenceinfo{2016}{}

\acmPrice{\$15.00}

%

\title{Monitoring the Top-{\huge{\em m}} Aggregation in a Sliding Window of
Spatial Queries}

\numberofauthors{8} 
%
\author{
%
%
 Farhana M. Choudhury \qquad  Zhifeng Bao \qquad  J. Shane Culpepper \\
       School of CSIT, RMIT University, Melbourne, Australia\\
       \{farhana.choudhury,zhifeng.bao,shane.culpepper\}@rmit.edu.au\\ \\
       Timos Sellis \\
       Department of CSSE, Swinburne University, Hawthorn, Australia\\
       tsellis@swin.edu.au
}
\date{30 May 2016}

\maketitle

\begin{abstract}
	In this paper, we propose and study the problem of top-$m$
	rank aggregation of spatial objects in streaming queries,
	where, given a set of objects $O$, a stream of spatial
	queries (\knn or range), the goal is to report the $m$
	objects with the highest aggregate rank.
	The rank of an object w.r.t.
	an individual query is computed based on its distance from
	the query location, and the aggregate rank is computed from
	all of the individual rank orderings.
	Solutions to this fundamental problem can be used to monitor
	the importance / popularity of spatial objects, which in turn
	can provide new analytical tools for spatial data.

	Our work draws inspiration from three different domains: rank
	aggregation, continuous queries and spatial databases.
	To the best of our knowledge, there is no prior work that
	considers all three problem domains in a single context.
	Our problem is different from the classical rank aggregation
	problem in the way that the rank of spatial objects are
	dependent on streaming queries whose locations are not known
	a priori, and is different from the problem of continuous
	spatial queries because new query locations can arrive
	in any region, but do not move.

	In order to solve this problem, we show how to upper
	and lower bound the rank of an object for any unseen
	query.
	Then we propose an approximation solution to
	continuously monitor the top-$m$ objects efficiently, for 
	which we design an Inverted Rank File (\irr) index to
	guarantee the error bound of the solution.
	In particular, we propose the notion of {\em safe ranking} to
	determine whether the current result is still valid or not
	when new queries arrive, and propose the notion of {\em
	validation objects} to limit the number of objects to update
	in the top-$m$ results.
	We also propose an exact solution for applications where
	an approximate solution is not sufficient.
	Last, we conduct extensive experiments to verify the
	efficiency and effectiveness of our solutions.
	\end{abstract}

%
%
\begin{CCSXML}
<ccs2012>
<concept>
<concept_id>10002951.10003227.10003236</concept_id>
<concept_desc>Information systems~Spatial-temporal systems</concept_desc>
<concept_significance>500</concept_significance>
</concept>
<concept>
<concept_id>10002951.10003227.10003236.10003237</concept_id>
<concept_desc>Information systems~Geographic information systems</concept_desc>
<concept_significance>300</concept_significance>
</concept>
<concept>
<concept_id>10002951.10003227.10003236.10003239</concept_id>
<concept_desc>Information systems~Data streaming</concept_desc>
<concept_significance>300</concept_significance>
</concept>
</ccs2012>
\end{CCSXML}

\ccsdesc[500]{Information systems~Spatial-temporal systems}
\ccsdesc[300]{Information systems~Geographic information systems}
\ccsdesc[300]{Information systems~Data streaming}
%
%

%
%
\printccsdesc


\keywords{spatial indexing; rank aggregation; streaming queries}

\section{Introduction}

Rank aggregation is a classic problem in the database community
which has seen several important advances over the years
~\cite{aggr, aggr_web, TA, TA_new, TA_new_conf, acn08-jacm}.
Informally, {\em rank aggregation} is the problem of combining
two or more rank orderings to produce a single ``best'' ordering.
Typically, this translates into finding the top-$m$ objects
with the highest aggregate rank, where the algorithms used for
ranking and aggregation can take several different forms.
Common ranking and aggregation metrics include majority ranking
(sum, average, median, and quantile), consensus-based ranking (Borda
count), and pairwise disagreement based ranking (Kemeny optimal
aggregation)~\cite{aggr,ranking}.
Rank aggregation has a wide variety of practical applications such as
determining winners in elections, sports analytics, collaborative
filtering, meta-search, and aggregation in database middleware.

One such application area where rank aggregation can be applied 
is in spatial computing~{\cite{sfa16-cacm}}.
In spatial databases for example, a fundamental problem is to rank
objects based on their proximity from a query location.
Range and $k$-nearest neighbor (\knn) queries are two pervasively
used spatial query types.
Given a set of objects $O$ and a query location $q$, a \knn query
returns a ranked list of $k$ objects with the smallest spatial
distance from $q$.
Given a query location $q$ and a query radius $r$, a range query
returns all the objects that are within $r$ distance from $q$,
often sorted by the distance from $q$~\cite{rank_range}.

Spatial queries are an important tool that provides partially ranked
lists over a set of objects.
Each object $o$ receives a different ranking (or is not ranked at
all) which depends on the query location.
Thus, aggregating the ranks of spatial objects can provide key
insights into object importance in many different scenarios.

For example, consider a real estate analytics problem where home
buyers are looking for houses to purchase.
Each person has a preference on housing location, and
a house is ranked based on the distance from a preferred location
(e.g., close to a school or a railway station).
A house that has a high aggregate rank is {\em popular} based on 
two or more users' preferences.
Clearly popularity in this context is a continuous query whose
results change over time as new buyers search for houses, and {\em
recency} can also play an important role when interpreting the final
results.
So, defining ``popularity'' is not immediately obvious in this
example.
However, identifying housing properties with the highest aggregate
rank regardless of how rank is defined is of practical importance for
both buyers and sellers.
The information can either be used to recommend the ``hottest''
houses currently on the market, as a starting search point for a new
buyer, or be used as a metric for a potential seller in monitoring
the ``popularity'' of the houses that are not currently on the market
and help make decisions on when to enter the market.

In this paper, we consider the problem of top-$m$ rank aggregation
of spatial objects for streaming queries, where, given a set of
objects $O$, a stream of spatial queries (\knn or range), the problem
is to report the $m$ objects with the highest aggregate rank.
Here, an object that satisfies the query constraint is ranked based
on its distance from the query location, and the aggregation is computed
using all of the individual rank orderings.
To maintain recency information and minimize memory costs, a sliding
window model is imposed on the query stream, and a query is valid
only while it remains in the window.
We consider one of the most common models for sliding windows, the
{\em count-based} window~{\cite{sliding07}}.

Our work draws inspiration from three different domains -- spatial
databases, continuous queries, and rank aggregation.
While several seminal papers have considered various combinations of
these three domains, no previous work has considered approaches to
combining all three.
We summarize previous work, and the subtle distinctions between
previous best solutions in these problem domains and our work in
Section~\ref{sec:related}.

In the domain of rank aggregation, previous solutions have addressed
the problem of incrementally computing individually ranked lists
using {\em on-demand} algorithms~\cite{aggr_seq,TA_new}.
However, these approaches do not consider streaming queries, and
the best way to extend these approaches to sliding window problem
is not obvious.

In the domain of continuous spatial queries, objects are streaming,
but the queries do not change~{\cite{sliding07,stream_knn,stream_rknn}}.
Continuous result updates of top-$k$ queries where the query location
is changing have also been extensively studied in the
literature~\cite{moving_knn,moving_rknn,moving_safe}.
These approaches make the assumption that a query location can move only
to an adjacent location, and construct a {\em safe region} around the
queries, such that the top-$k$ results do not change as long as the
query location remains in the safe region.
These problems are subtly different from the streaming query problem
explored in this work, where each new query location can be anywhere
in space and the query does not move.

In the domain of spatial databases, other related work on finding the
top objects with the maximum number of Reverse $k$ Nearest Neighbors
(\rknn) exists~{\cite{top_rnn_vldb05,influential_obj,top_rknn_uncertain}}.
Given a set of objects $O$, the $RkNN(o)$ is the set of objects
containing $o$ as a \knn.
Another variant of \rknn is bichromatic, where given a set of objects
$O$ and a set of users $U$, the $RkNN(o)$ is the set of users
regarding $o$ as a \knn of $O$.
Although the count of \rknn is also an aggregation, these solutions
do not consider the rank position of the objects for the aggregation.
Rather, the approaches rely on properties of skyline and $k$-skyband
queries to estimate the number of \rknn for an object.
Finding the exact rank of an object in a skyline or a $k$-skyband is
not straightforward.
Moreover, to the best of our knowledge, there is no previous work on
the continuous case of finding the object with the maximum number of
\rknn for streaming queries (users).

\myparagraph{Our contribution}
In this paper: (i) We propose and formalize the problem of top-$m$
rank aggregation on a sliding window of spatial queries, which draws
inspiration from the three classical problem domains -- rank
aggregation, continuous query and spatial databases.
(ii) We propose an exact solution to continuously monitor the top-$m$
ranked objects.
(iii) We propose an approximation algorithm with guaranteed error
bounds to maximize the reuse of the computations from previous
queries in the current window, and show how to incrementally update
the top-$m$ results only when necessary.
In particular, the following three technical contributions have been
made.
(iv) We propose the notion of {\em safe ranking} to determine whether
the result set in a previous window is still valid or not in the
current window.
(v) We propose the notion of {\em validation objects} which are able
to limit the number of objects to be updated in the result set.
(vi) We show how to use an {\em Inverted Rank File} (\irr) index 
to bound the error of the solution.

To summarize, aggregating spatial object rankings can provide key
insights into the importance of objects in many different problem
domains.
Our proposed solutions are generic and applicable to many different
spatial rank aggregation problems, and a variety of different query
types such as range queries, $k$-nearest neighbor (\knn) queries, and
reverse \knn (\rknn) queries can be adapted and used within our
framework.

The rest of the paper is organized as follows.
Sec.~\ref{sec:related} reviews previous related work.
Sec.~\ref{sec:ps} presents the problem definition and an exact
solution.
Sec.~\ref{sec:rank_bounds} shows how to compute the lower and upper
bounds for rank aggregation using an inverted rank file, which
provides a foundation for an approximate solution to the rank
aggregation problem introduced in Sec.~\ref{sec:approx_method}.
In Sec.~\ref{sec:exp}, we validate our approach experimentally. 
Finally, we conclude and discuss future work in
Sec.~\ref{sec:conclusion}.

\section{Related Work} \label{sec:related}
Since our work draws inspiration from three different problem domains in
database area -- rank aggregation, spatial queries and streaming
queries, we review the related work for each of these problem domains,
and combinations of two domains (if any).

\subsection{Rank aggregation}
Given a set of ranked lists, where objects are ranked in multiple 
lists, the problem is to find the top-$m$ objects with the highest
aggregate rank.
This is a well studied and classic problem, mostly for its importance
in determining winners based on the ranks from different
voters~\cite{Bartholdi1989,TA,TA_new,TA_new_conf,aggr_seq,aggr,aggr_web}.

The approaches of {\citet{aggr}} and {\citet{aggr_web}} assume that
the ranked lists exist before aggregation, and explore exact and
approximate solutions for {\em Kendall optimal aggregation}, and the
related problem of Kemeny optimal aggregation, which is known to be
NP-Hard for $4$ or more lists.
When the complete ranked lists are not available a priori, or random
access in a ranked lists is expensive, {\citet{aggr_seq}} and
{\citet{TA_new}} have shown that the ranked objects for an
individually ranked list can be computed incrementally one-by-one
using on-demand aggregation.
However, these incremental approaches~\cite{aggr_seq,TA_new} are not
straightforward to extend to sliding window models where queries are
also removed from the result set as new queries arrive.




\subsection{Top-k aggregation over streaming data}
A related body of work on aggregation is to find the most frequent items, or finding the majority item over a stream of data~\cite{freq_stream,freq_stream2}. This problem is essentially an aggregation of the count of the data, which is studied for sliding window models as well~\cite{freq_slide, freq_slide2}. The solutions can be categories mainly as sampling-based, counting-based, and hashing-based approaches. The goal of the approaches is to identify the high frequency items and maintain their frequency count as accurately as possible in a limited space. As the result of the queries are not readily available for the count aggregation in our problem, these approaches cannot be directly applied. 

\subsection{Database queries}
The relevant work from the database domain can be categorized mainly
as - (i) moving, (ii) streaming, and (iii) maximum top-$k$.

\myparagraph{Moving queries}
In spatial databases, given a moving query and a set of static
objects, the problem is to report the query result
continuously as the query location
moves~\cite{moving_safe,moving_knn,moving_vstar,moving_rknn,moving_range}.
The most common assumption made to improve efficiency is that 
a query location can move only to a neighboring region~\cite{moving_safe,
moving_vstar,moving_rknn,moving_range}. 
By maintaining a {\em safe region} around the query location,
a result set remains valid as long as the query moves
within that region.
The results must only be updated when the query moves out of the
safe region.
Thus, both the computation and communication cost to report
updated results are reduced.
\citet{moving_knn} substitute the safe region with a set of
{\em safe guarding objects} around the query location such that as
long as the current result objects are closer to query than any
safe guarding objects, the current result remains valid.

The problem of continuously updating \knn results when both the query
location and the object locations can move was initially explored by
\citet{cont_sigmod}.
\citeauthor{cont_sigmod} solve the problem by using a conceptual
partitioning of the space around each query, where the partitions are
processed iteratively to update results when the query or any of the
objects move.
In contrast, streaming queries studied in this paper can originate
anywhere in space and does not move, thus the safe region based
approaches are not applicable.

\myparagraph{Query processing over streaming objects}
Many different streaming query problems have been explored over
the years, among which the problem of continuous maintenance of query
results \cite{stream_knn} is most closely related to our problem.
\citet{stream_knn} explore an expiration time based recency approach
where objects are only valid in a fixed time window.
Other related work explored sliding window models where objects are
valid only when they are contained in the sliding
window~{\cite{sliding07,stream_rknn,topk_sliding_06,topk_pub_sub}}.
The two most common variants of sliding windows are - (i) count
based windows which contain the $|W|$ most recent data objects;
and (ii) time-based windows which contain the objects whose
time-stamps are within $|W|$ most recent time units.
Note that the number of objects that can appear within a time-based
window can vary, when the number of objects in a count based
window are fixed.

The general approach in all of these solutions is as follows --
Queries are registered to an object stream, and as a new object
arrives, the object is reported to the queries if it qualifies as a
result for that query.
The solutions rely on the idea of a skyline, where the set of objects
that are not dominated by any other object in any dimension must be
considered.
In these models, the queries are static, and the skyline is computed
for a query.
Newly arriving objects can be pruned based on the properties of the
skyline.
The key difference between our problem and related streaming problems
is that objects are static in our model while the queries are
streaming.

\myparagraph{Maximizing Reverse Top-$k$}
Another related body of work is reverse top-$k$ querying
~\cite{influential_obj,kmost_demanding, kmost_fav,most_influential}.
Given a set of objects and a set of users, the query is to find the object that is a
top-$k$ object of the maximum number of users.
\citet{influential_obj} explore solutions for spatial
databases using precomputed Voronoi diagrams.
Other solutions for the problem use properties of skyline and
$k$-skyband to estimate the number of users that have an object as a
top-$k$ result.
A $k$-skyband contains the objects that are dominated by at most
$k-1$ objects.
Unlike top-$k$ queries, the number of objects that can be returned by
a range query is not fixed, therefore maintaining a skyband is not
straightforward for range queries.

A related problem in spatial databases is to find a region in space
such that if an object is placed in that region, the object will
have the maximum number of reverse $k$NNs~\cite{maxoverlap, maxfirst,
optregion, film}.
Solutions for this problem depend on static queries (users), and
are therefore not directly applicable to our problem.
Moreover, these solutions do not consider the rank position of the
object in the top-$k$ results in their solutions.

\citet{influential_time} consider the temporal version of the reverse
$k$NN problem.
The score of an object $o$ is defined as the number of \rknn, and
the continuity score of $o$ is defined as the maximum number of
consequent intervals for which $o$ is a top-$m$ highest scored
object.
The goal is to find the object with the highest continuity score.
Although the problem is scoped temporally, both the queries and the
objects in the database are static.

\section{Preliminaries}\label{sec:ps}

\subsection{Problem Definition}
Let $O$ be a set of $N$ objects where $o \in O$ is a single point in
$d$-dimensional Euclidean space, $X^d$. 
Now consider a stream of user queries $\var{SQ}$ which is an infinite
sequence $\langle q_1,q_2,\dots \rangle$ in order of their arrival time.
Each query $q$ is a single point in $X^d$, and associated with a
spatial constraint, $\constraint(q)$, such as range or \knn.
In this work, we focus primarily on range queries, but our solutions
are easily generalized to other spatial query types.


We adopt the sliding window model where queries are a 
continuous ordered stream, and a query is only valid while it
belongs to the sliding window $W$.
We consider only a count-based sliding window in this work,
but time-based windows are also possible.
A count-based window contains the $|W|$ most recent items,
ordered by arrival time.
Before defining our problem, we first present a rank aggregation
measure of an object for a window of $|W|$ queries, denoted as {\em
popularity} which will be used in this work.

\myparagraph{Popularity measure}

Each query $q$ partitions the $O$ objects into two sets such that,
$O^+_q = \{o \in O
\mid o \mbox{ satisfies } \constraint(q) \}$ and $O^-_q = \{o \in O \mid o \mbox{ does not satisfy } \constraint(q) \}$.
Each object $o^+ \in O^+_q$ is ranked based on the Euclidean distance
from $q$, $\dist(o,q)$.
Other distance measures can be used to rank the objects, but are not
considered in this work.
The rank of $o$ with respect to $q$, $\rank(o,q) = i$, is defined as
the \mnth{i} position of $o \in O^+_q$ in an ordered list indexed
from $\mn{i} = 1$ to $|O^+|$ where $\dist(o^+_i,q) \le
\dist(o^+_{i+1},q)$.

The popularity of an object $o \in O$ in a sliding window $W$ of
queries is an aggregation of the ranks of $o$ with respect to the
queries in $W$.
We now formally define {\em Popularity} ($\popularity$) as a rank
aggregation function for a sliding window of $|W|$ queries.
Other similar aggregation functions are applicable to our problem but
beyond the scope of this paper.
$$
\popularity(o,W) = \frac{ \mathlarger{\sum_{i=1}^{|W|}}
   \left \{
\begin{array}{ll}
      N-\rank(o,q_i) + 1 & \mbox{where } o \in O^+_{q_i}\\
      0 & \mbox{otherwise} \\
\end{array}
\right. }
{|W|}
$$


A higher value of $\popularity(o,W)$ indicates higher popularity.
If an object does not satisfy the constraint of a query, the
contribution in the aggregation for that query is zero.

\begin{table}[!ht]
\setlength\tabcolsep{4pt}
\caption{Notation}
\vspace{-8pt}
\hspace{-5mm}
\begin{tabular}{l|p{5.54cm}}
\toprule
{\bfseries Symbol} & {\bf Description}\\
\midrule
\textbf{Section~\ref{sec:ps}} & \\
$W$ & Sliding window of $|W|$ most recent queries. \\
$\dist(o,q)$ & Euclidean distance between object $o$ and query $q$.\\
$\constraint(q)$ & Spatial constraint (range or \knn) of $q$. \\
$\rank(o,q)$ & Ranked position of $o$ based on $\dist(o,q)$.\\
$O^+_q$ & The set of objects in O that satisfy $\constraint(q)$.\\
$\rho(o,W)$ & Popularity (aggregated rank) of $o$ for queries in $W$.\\ 
\textbf{Section~\ref{sec:rank_bounds}} &  \\
$c$ & A leaf level cell of a Quadtree.\\
$\mindist(o,c_q)$ ($\maxdist(o,c_q)$) & The minimum (maximum) Euclidean distance between $o$ and any query in $c_q$.\\
$\LR(o,c_q)$ ($\UR(o,c_q)$) & Lower (upper) bound rank of $o$ for any query $q$ in cell $c_q$.\\
$B$  & Block size of the rank lists. \\
$\mindist(b,c_q)$ ($\maxdist(b,c_q)$) & The minimum (maximum) distance between any object in a block $b$ and any query in cell $c_q$. \\
%
%
\textbf{Section~\ref{sec:partition},~\ref{subsec:approx_measure}}\\
$\epsilon$ & Approximation parameter. \\
$qo$ & The least recent query, which is excluded from $W$. \\
$qn$ & The most recent query, which is added to $W$. \\
$W_{i-1}$, $W_i$ & Two consecutive windows, where $W_i$ is derived from $W_{i-1}$ by excluding $qo$ and adding $qn$. $|W_i|$=$|W_{i-1}|$. \\
$\ar(o,q)$ & Approximate rank of $o$ for $q$.\\
$\ap(o,W_i)$ & Approximate popularity of $o$ for window $W_i$. \\
$\res_i$ & The set of result objects for a window $W_i$. \\
$o_m$ & The \mnth{m} object from the set $\res_{i-1}$.\\
\textbf{Section~\ref{subsec:saferanking}} & \\
$\LBR(b,q)$ ($\UBR(b,q)$) & Lower (upper) bound rank of any object in block $b$ w.r.t. $q$. \\
$\popmplusoneth$ & The approximate popularity of top \mnth{(m+1)} object from $\res_i-1$ in previous window $W_{i-1}$.\\
$\popqo$ & The approximate popularity of top \mnth{m} object from $\res_{i-1}$, updated w.r.t. excluding $q_o$ from Window $W_{i-1}$.\\
$\popmth$ & The approximate popularity of top \mnth{m} object from $\res_i$.\\
\bottomrule
\end{tabular}
\label{table:notation}
\end{table}

Table~{\ref{table:notation}} summarizes the notation used in the
remainder of the paper.
We now formally define our problem as follows: 

\begin{mydefinition}
{\bf Top-$m$ popularity in a sliding
window of spatial queries (\tmpo) problem}.
Given a set of objects $O$, the number of objects to monitor $m$, and
a stream of spatial queries SQ ($q_1,q_2,\dots$), maintain an
aggregate result set $\res$, such that $\res \subseteq O$, $|\res| =
m$, $\forall o \in \res$, $o^{\prime} \in O \backslash
\res$, $\popularity(o, W) \ge \popularity(o^{\prime}, W)$, where $W$
contains the $|W|$ most recent queries.
\end{mydefinition}

\subsection{Baseline}\label{sec:baseline}

A straightforward approach to continuously monitor the top-$m$
popular objects in $W$ is: 
(i) Each time a new query, $qn$ arrives, compute the individual rank
of all the objects in $O^+_{qn}$ that satisfy the query constraint,
$\constraint(qn)$.
(ii) Update $\rho$ of the objects $o \in O^+_{qn}$ for $qn$, and the
objects $o^{\prime} \in O^+_{qo}$ for the query $qo$.
Here, $qo$ is the least recent query that is removed from $W$ as $qn$
arrives.
(iii) Sort all of the objects that are contained in $O^+_q$ for at
least one query $q$ in the current window, and return the top-$m$ objects with the highest $\popularity$ as
$\res$. 
As there is no prior work on aggregating spatial query results in a
sliding window, (See Section~\ref{sec:related}), we consider this
straightforward solution as a baseline approach.

Unfortunately, the baseline approach is computationally expensive for
several reasons:
\begin{enumerate}[leftmargin=1em]
\itemsep 0em
\item For each query, the ranks of all objects that satisfy the
$\constraint(q)$ must be computed.
As the number of objects can be very large, and the queries can arrive
at a high rate, this step incurs a high computational overhead.
\item Each time the sliding window shifts, $\popularity$ for a large
number of objects may need to be updated.
\item The union of all of the objects that satisfy $\constraint(q)$ for
each query in the current window must be sorted by the updated
$\popularity$.
\end{enumerate}

To overcome these limitations, we seek techniques which avoid
processing objects for the query stream that cannot affect the
top-$m$ objects in $\res$.
This minimizes the number of popularity computations that must occur.
Two possible approaches to accomplish this, are: accurately estimate
the rank of the objects for newly arriving queries, or reuse the
computations from prior windows efficiently.
We consider both of these approaches in the following sections.

\section{Rank Bounds and Indexing}\label{sec:rank_bounds}
In this section, we first present how to compute an upper bound and a
lower bound for the rank of an object w.r.t.
an unseen query, and then propose an indexing approach referred to as 
an {\em Inverted Rank File} (\irr) that can be used to estimate the
rank of objects for arriving queries.

\subsection{Computing Rank Bounds} 

  \begin{figure}
    \centering
    \includegraphics[height=1.9in]{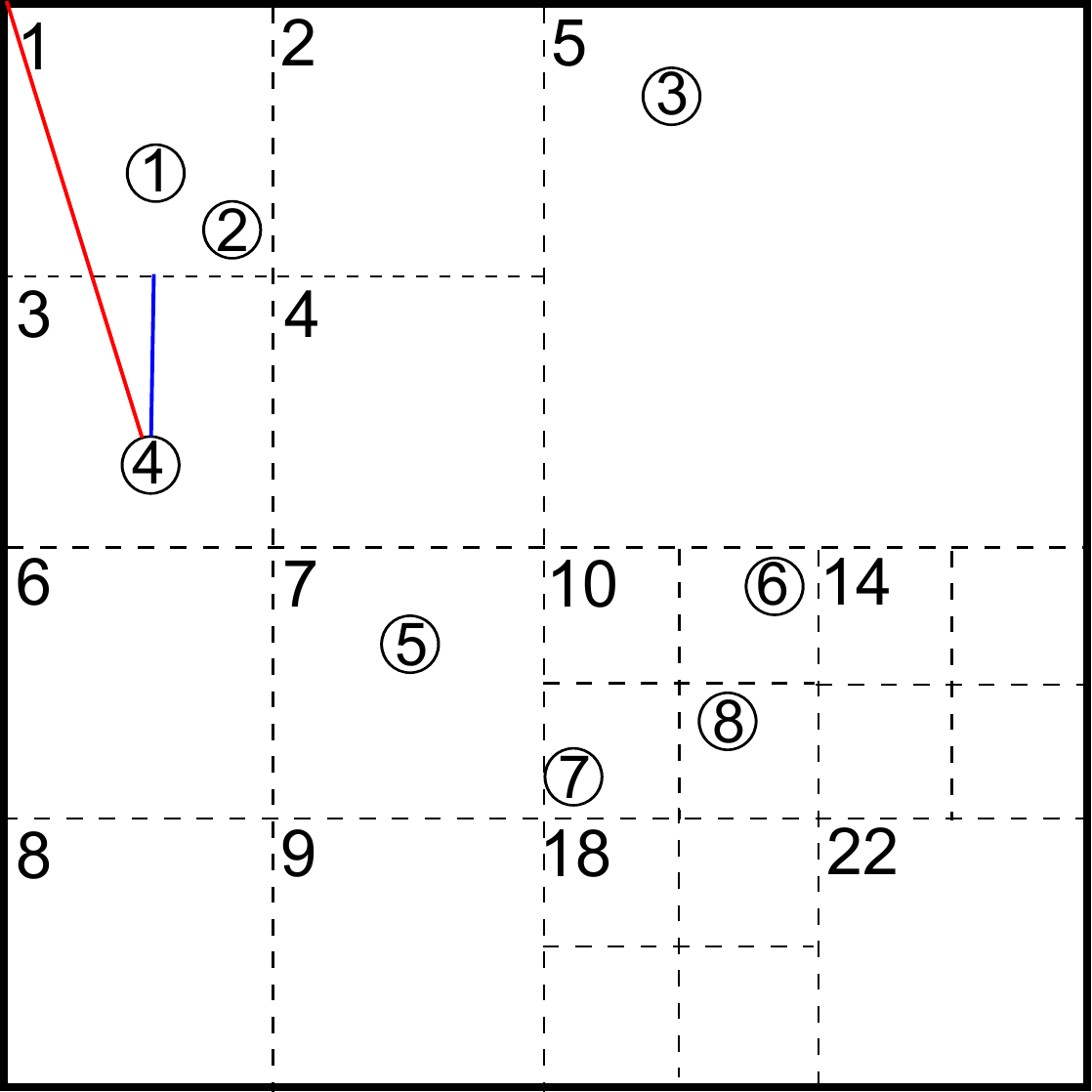}
    \caption{Computing rank bounds}
    \label{fig:rank_bounds}
  \end{figure}

Here, we assume that the space has been partitioned into cells (the
space partitioning step is explained in Section~\ref{sec:partition}).
The rank bound for an object $o$ w.r.t. a cell $c_q$ is 
computed as follows.
For any query $q$ arriving with a location in cell $c_q$, the rank of
$o$ satisfies the condition, $\LR(o,c_q) \le \rank(o,q)
\le \UR(o,c_q)$, where
$\LR(o,c_q)$ and $\UR(o,c_q)$ are the lower and the upper bound
rank of $o$ for any query $q$ in cell $c_q$, respectively.

\myparagraph{Lower rank bound}
The lower rank bound, $\LR(o,c_q)$, is computed such that the rank of
object $o$ will be at least $\LR(o,c_q)$ for a query $q$ contained 
in cell $c_q$.
Note that a smaller value of rank indicates a smaller Euclidean
distance from the query location.
The lower bound rank is computed from the number of objects
$o{^\prime} \in O \backslash o$ that are definitely closer to $q$ in
$c_q$ than $o$. 
%
Specifically, let $\ell_n$ be the number of objects $o{^\prime} \in O
\backslash o$ such that $\maxdist(o{^\prime},c_q) \le \mindist(o,c_q)$,
where $\maxdist(o{^\prime},c_q)$ is the maximum Euclidean distance
between $o^{\prime}$ and cell $c_q$, and $\mindist(o,c_q)$ is the
minimum Euclidean distance between $o$ and $c_q$.
Therefore, even if a query $q$ has a location that is the closest
point of $c_q$ to $o$, there are still at least $\ell_n$
objects closer to $q$ than $o$.
So the rank of $o$ must be greater than $\ell_n$ for any query in
$c_q$, meaning that $\LR(o,c_q) = \ell_n + 1$.
We now give an example of computing the lower rank bound using
Figure~\ref{fig:rank_bounds}.

\begin{example}
Let $O = \{o_1, o_2, \dots, o_8\}$ be the set of objects and $c_{1}$
be a cell in Euclidean space $X^d$.
The minimum distance between $c_1$ and object $o_4$ is shown as the 
blue line.
From Figure~\ref{fig:rank_bounds}, only the maximum distance between
$c_1$ and the object $o_1$ is less than $\mindist(o_4,c_1)$.
Therefore, the lower bound rank of $o_4$ for cell $c_1$ is
\LR$(o_4,c_1) = 1+1 = 2$, i.e., the rank of $o_4$ for any query
appearing in $c_{1}$ must be greater than or equal to $2$.
\end{example}

\myparagraph{Upper rank bound}
The upper rank bound $\UR(o,c_q)$ is the maximum rank an object $o$
can have for any query $q$ appearing in any location of $c_q$.
This bound is computed as the number of objects that can be closer
to $q$ in $c_q$ than $o$.
Let, $u_n$ be the number of objects $o{^\prime} \in O \backslash o$
such that, $\mindist(o{^\prime},c_q) < \maxdist(o,c_q)$ for any query
$q$ in $c_q$, where $\mindist(o{^\prime},c_q)$ is the minimum
distance between $o{^\prime}$ and $c_q$ and $\maxdist(o,c_q)$ is the
maximum distance between $o$ and $c_q$.
Therefore, even if a query $q$ arrives at the farthest location in
$c_q$ from $o$, there are at most $u_n$ objects that can possibly be
closer to $q$ than $o$.
So the rank of $o$ cannot be greater than $u_n+1$ for any query in
$c_q$, resulting in $\UR(o,c_q) = u_n+1$.

\begin{example}
In Figure~\ref{fig:rank_bounds}, the maximum distance from $o_4$ to
cell $c_1$ is shown with a red line.
Here, the minimum distance between $c_1$ and each of the objects
$o_1,o_2$ and $o_3$ is less than $\maxdist(o_4,c_1)$.
So, \UR$(o_4,c_1) = 3+1 = 4$.
Therefore, the rank of $o_4$ for any query in $c_{1}$ must be less
than or equal to $4$.
\end{example}

\subsection{Indexing Rank}\label{sec:index}
\begin{figure}
\centering
\includegraphics[height=1.1in]{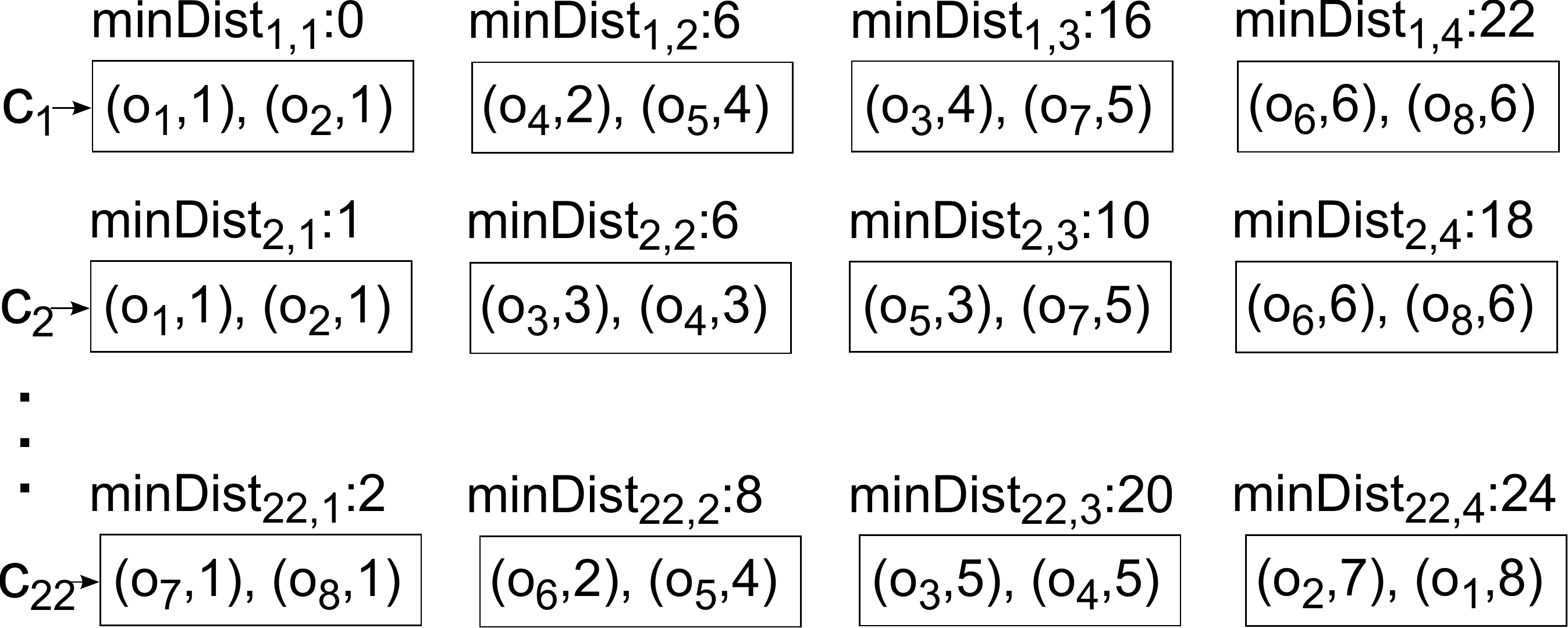}
\vspace{-6pt}
\caption{An example inverted rank file}
\label{fig:index}
\end{figure}

We present an indexing technique called an {\em Inverted Rank
File} (\irr) where $X^d$ is partitioned into different cells, and the
rank bounds of each object for queries appearing in the cell is
precomputed.
The rank information is indexed such that, if a query $q$ arrives
anywhere inside a cell $c_q$, the rank of any object for $q$ can be
estimated.
A quadtree structure is employed to partition $X^d$ into cells.

First, we present the general structure of an \irr.
Later in Section~\ref{sec:partition} we present a space partitioning
approach to approximately answer \cmpo with a guaranteed error bound, and present the rationale behind using a quadtree for space
partitioning.

\myparagraph{Inverted Rank File}
An inverted rank file consists of two components, a collection of all
leaf level cells of the quadtree, and a set of \emph{rank lists},
one for each leaf level cell $c$ of the quadtree.
Each rank list is a sorted sequence of tuples of the form $\langle o,
\LR(o,c_q)\rangle$, one for each object $o \in O$, sorted in ascending
order of $\LR(o,c_q)$.
If multiple objects have the same $\LR(o,c_q)$ for a cell $c_q$,
those tuples are sorted by $\mindist(o,c_q)$.
Here, $\LR(o,c_q)$ is the lower bound rank of $o$ for a query $q$
coming in cell $c_q$.
Note that, for any object $o$, $\LR(o,c_q) \le \rank (o,q)$ holds for
any query $q$ arriving in any location of $c_q$.
Each rank list is stored as a sequence of blocks of a fixed length,
$B$.
Each block $b$ of the rank list for cell $c_q$ is associated with the
minimum distance between $c_q$ and any object in $b$, where 
$\mindist(b,c_q) = \min_{o \in b} \mindist(o,c_q)$.

\begin{example}
Figure~\ref{fig:index} illustrates an \irr index for the objects $O
= \{o_1, o_2, \dots, o_8\}$ shown in Figure~\ref{fig:rank_bounds}.
Assume that the quadtree partitions the space into $22$ disjoint leaf
cells as shown in Figure~\ref{fig:rank_bounds}, and the size of
each block is $B = 2$.
As a specific example, the lower bound rank of the object $o_4$ is
$2$ for the quadtree cell $c_1$.
If a new query arrives in any location contained within cell $c_1$,
the lower bound of the rank of $o_4$ is $2$.
\end{example}

\section{Approximate Solution}\label{sec:approx_method}

As the rate of incoming queries can be very high, there may be
instances where the cost of computing the exact solution is too 
expensive.
In this section, an approximate solution for the {\cmpo} problem is
presented.
This can be accomplished by using the rank bounds to create an
approximate solution with a guaranteed error bound.
At the highest level, the approximate solution consists of the
following steps: 

\begin{enumerate}[leftmargin=1em]
\itemsep 0em
\item A space partitioning technique is used to construct an \irr
index in order to support the incremental computation of the
approximate solution of {\cmpo}.
\item A {\em safe rank} is computed which represents a threshold, if this threshold is exceeded by a result object currently in $\res$, that object
must remain in $\res$ as a valid result.
Specifically, the current safe rank can be computed by combining: (i)
a block based safe rank; and (ii) an object based safe rank.
\item If the ranks of all the result objects are \emph{safe}, $\res$
does not need to be updated.
Otherwise, more work must be done to determine if any object can 
affect $\res$.
This can be achieved using a second technique called {\em validation
objects}, which incrementally identifies the objects than can affect
$\res$.
As long as the current result objects have a higher popularity than
the validation objects, $\res$ does not need to be updated.
\item If $\res$ must be updated, the approximate popularity of the
affected objects are computed.
We show that the popularity computations of the prior windows can
be used to efficiently approximate the popularity scores of the
objects that must change.
\end{enumerate}

We first present the space partitioning approach used to construct the
{\irr} for approximate results in Section~\ref{sec:partition}.
Then, we present the approximate popularity measure of an object
using rank bounding in Section~\ref{subsec:approx_measure}.
In particular, we first outline the workflow of our approximation
algorithm, then we propose the notion of {\em safe ranking} to
determine whether the current result is still valid or not (in
Section~\ref{subsec:saferanking}), and the notion of {\em validation
objects} which limit the number of objects to update in the result
whenever the window shifts (in Section~\ref{subsec:vo}).
Finally, in Section~\ref{sec:error} we discuss the error bound
guarantees provided by our approximation algorithm.

\subsection{Space partitioning}\label{sec:partition}

Ideally, rank bound estimations should be as close as possible to
the actual rank of each object.
If the quadtree leaf cell where a query $q$ arrives is as small as
a single point location (i.e., the same as the location of $q$), then
both the upper and the lower bound ranks of any object will be
exactly the same as the actual rank of that object for $q$.
However, if the space is partitioned in this way, then each point in
$X^d$ will become a leaf cell of the quadtree, and the number of
cells will be infinite.
Therefore, we propose a partitioning technique which guarantees that 
the difference between the rank bounds of any object and its true 
rank is bounded by a threshold, $\epsilon$.

Specifically, for any $o \in O$, and any leaf level cell $c$ of the
quadtree, the difference between the upper and the lower bound rank must
be within a {\em percentage of the lower bound rank}:
\begin{equation}
\UR(o,c) - \LR(o,c) \le \epsilon \times \LR(o,c) 
\end{equation}
\noi Otherwise, cell $c$ is further partitioned until the condition holds. 
As an example, let the threshold be $50\%$ of the lower bound,
$\epsilon = 0.5$.
For an object $o$, and a cell $c_i$, let $\LR(o,c_i) = 10$ and
$\UR(o,c_i) = 20$.
So, the cell $c_i$ needs to be further partitioned for $o$ until the
condition is met.
As another example, for the same object $o$ and another cell $c_j$,
let $\LR(o,c_j) = 100$ and $\UR(o,c_j) = 120$.
Now cell $c_j$ does not need to be partitioned for $o$ since $120 - 100 \le 0.5 \times 100$.

The intuition behind this partitioning scheme becomes quite clear when
the notion of ``top'' ranked objects is taken into consideration.
Getting the exact position of the highest ranked object matters much
more than getting the exact position of the object at the thousandth
position.
So, the granularity of exactness in our inequality degrades gracefully
with the true rank of the object.

\setlength{\algomargin}{1.2em} 
\begin{algorithm}
\caption{{\sc Quadtree Partition}($O, \epsilon$)}
\label{algo:partition}
\begin{small}
Initialize {\sf Quadtree} with the $X^d$\\
{\em node} $\leftarrow$ {\sf Quadtree}({\em root})\\
{\sf Quadtree}({\em root}) $\leftarrow$ {\sc Partition}({\em node}, $O,\epsilon $)\\
{\sc return} {\sf Quadtree}\\ 
\hrulefill \\
{\sc Procedure Partition}($\var{node}$,$O,\epsilon $)\\
$O^\prime$ $\leftarrow \varnothing$\\
\For{$ o \in O$}
{	
	\uIf{$\UR(o,\mbox{node}) - \LR(o,\mbox{node}) > \epsilon \times \LR(o,\mbox{node})$}
	{
		$O^\prime \leftarrow o$\\
	}
}
\uIf{$O^\prime \ne \varnothing$}
{
	{\sc Split}($\var{node}$)\\
	\For{child of node}
	{
		$\var{child}$ $\leftarrow$ {\sc Partition}($\var{child}$, $O^\prime, \epsilon$)\\
	}
}
{\sc return} {\em node}\\
{\sc end procedure}
\end{small}
\end{algorithm}

\myparagraph{Partitioning process}
The partitioning of $X^d$ using this strategy can be achieved
iteratively.
Algorithm~\ref{algo:partition} illustrates the 
partitioning process.
The root of the quadtree is initialized with the entire space $X^d$.
The process starts from the root cell and recursively partitions
$X^d$.
If the partitioning condition is not satisfied for an object $o$ and
a cell $c$, partitioning of $c$ continues until Condition (1) is met
(Lines 1.8 - 1.14).
The process terminates when for each object $o$, the partitioning
condition holds for all of the current leaf level quadtree cells $c$.

\medskip
\noindent{\bf Why use a Quadtree?}
{\xspace} We use a quadtree to partition the space and then
organize the spatial information for each quadtree cell.
The rationale for using a quadtree is as follows: (i) The quadtree
partitions the space into mutually-exclusive cells.
In contrast, MBRs in an R-tree may have overlaps, so a query
location can overlap with multiple partitions, making it 
difficult to estimate the object ranks in new queries.
(ii) A quadtree is an update-friendly structure, and 
the partitioning granularity can be dynamically changed
using $\epsilon$ to improve the accuracy bounds.
This allows performance to be quickly and easily tuned
for different collections.
(iii) In a quadtree, a cell $c$ is partitioned only when any rank
bounds for $c$ do not satisfy Condition (1).
In contrast, if a regular grid structure of equal cell size is used,
enforcing partitioning using Condition (1) will result in
unnecessary cells being created.

Now we present the approximate popularity measure for an object in a
sliding window $W$ of queries using the new rank bounds.

\subsection{Framework of approximate solution}\label{subsec:approx_measure}
In this section, we first introduce how to compute the approximate
popularity of an object for a given sliding window, then we 
show how to aggregate the top-$m$ approximate results.
Since this section is all about how to compute the approximate
popularity of objects, we use the terms popularity and approximate
popularity interchangeable, unless specified otherwise.

First, a lemma is presented to show that the rank of any object
$o$ for a query $q$ arriving in a cell $c$ can be estimated using
only the lower bound rank, $\LR(o,c)$ within an error bound.

\begin{lemma}\label{lemma:rank_bound}
For any object $o \in O$, and any query $q$ arriving in cell $c$, $\LR(o,c)
\le \rank(o,q)
\le (1+\epsilon) \times \LR(o,c)$ always holds.
\end{lemma}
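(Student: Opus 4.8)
The plan is to derive the two-sided bound by chaining two facts already established earlier in the excerpt: the basic rank-bound property $\LR(o,c) \le \rank(o,q) \le \UR(o,c)$ from Section~\ref{sec:rank_bounds}, and the partitioning condition, Condition~(1), that is enforced on every leaf cell of the quadtree built by Algorithm~\ref{algo:partition}. The whole point of the lemma is that it lets us discard the upper bound $\UR(o,c)$ and express everything in terms of the single stored quantity $\LR(o,c)$, which is exactly what the inverted rank file records.

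First I would dispatch the left inequality, which is immediate. By the construction of the lower rank bound, $\LR(o,c)$ is one more than the number of objects that are \emph{definitely} closer to every point of $c$ than $o$ is; hence no query location inside $c$ can give $o$ a rank smaller than $\LR(o,c)$, so $\LR(o,c) \le \rank(o,q)$ for every $q$ in $c$. Next I would handle the right inequality, where the approximation parameter enters. The rank-bound property gives $\rank(o,q) \le \UR(o,c)$. Since $c$ is a leaf cell of the fully partitioned quadtree, Condition~(1) holds for $o$ and $c$, i.e. $\UR(o,c) - \LR(o,c) \le \epsilon \times \LR(o,c)$, which rearranges to
$$\UR(o,c) \le (1+\epsilon)\times\LR(o,c).$$
Composing this with $\rank(o,q) \le \UR(o,c)$ yields $\rank(o,q) \le (1+\epsilon)\times\LR(o,c)$, and combining the two inequalities establishes the lemma.

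The one step I would flag as requiring care — and the only real obstacle — is justifying that Condition~(1) is genuinely in force for the cell $c$ in which $q$ lands. This depends on the termination of the recursive refinement in Algorithm~\ref{algo:partition}: the procedure subdivides any cell that violates Condition~(1) for some object, and, as noted in Section~\ref{sec:partition}, shrinking a cell toward a single point drives $\UR(o,c) - \LR(o,c)$ to zero, so the condition is eventually met for every object at every surviving leaf cell. I would therefore state and use the lemma with the understanding that $c$ ranges over the leaf cells of the completed quadtree, for which Condition~(1) holds by construction; granting that, the remainder of the argument is the short algebraic manipulation above.
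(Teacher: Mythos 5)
Your proof is correct and follows essentially the same route as the paper's: it combines the basic rank-bound property $\LR(o,c) \le \rank(o,q) \le \UR(o,c)$ with the partitioning condition $\UR(o,c) - \LR(o,c) \le \epsilon \times \LR(o,c)$ enforced on every leaf cell. Your added remark about why Condition~(1) actually holds at every leaf (termination of the recursive refinement) is a reasonable extra precaution, but the core argument is the one the paper gives.
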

\begin{proof}
The rank bounds are computed such that $\LR(o,c) \le \rank(o,q) \le
\UR(o,c)$ always holds.
For any object $o \in O$, and for any leaf level cell $c$ of the
quadtree, the space is partitioned in a way that guarantees $\UR(o,c)
- \LR(o,c) \le \epsilon \times \LR(o,c)$, so, clearly $\LR(o,c) \le
\rank(o,q) \le (1+\epsilon) \times \LR(o,c)$ also holds.
\end{proof}

Based on Lemma~\ref{lemma:rank_bound}, we approximate the rank of an
object with an error bound as: 
\begin{equation} \label{eqn:ar}
\ar(o,q)  = (1+ \frac{\displaystyle \epsilon}{\displaystyle 2}) \times \LR(o,c) 
\end{equation}

\begin{corollary}\label{lemma:rank_error1}
For any object $o \in O$, and any query $q$ arriving in cell $c$,
$|\rank(o,q) - \ar(o,q)| \le \epsilon/2 \times \LR(o,c)$ always holds.
\end{corollary}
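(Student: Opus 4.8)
The plan is to derive the bound directly from Lemma~\ref{lemma:rank_bound} and the definition of $\ar$ in Equation~\eqref{eqn:ar}, with no additional machinery required. Writing $L = \LR(o,c)$ for brevity, Lemma~\ref{lemma:rank_bound} gives the two-sided bound $L \le \rank(o,q) \le (1+\epsilon)L$, and Equation~\eqref{eqn:ar} fixes the estimate at $\ar(o,q) = (1+\tfrac{\epsilon}{2})L$. The key observation I would make explicit is that $\ar(o,q)$ is precisely the \emph{midpoint} of the interval $[\,L,\,(1+\epsilon)L\,]$ in which $\rank(o,q)$ is guaranteed to lie: its distance from each endpoint is exactly half the interval width $(1+\epsilon)L - L = \epsilon L$, namely $\tfrac{\epsilon}{2}L$.

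Concretely, I would carry out two one-line subtractions to bound the signed error $\rank(o,q) - \ar(o,q)$ from above and below. Substituting the upper bound on $\rank(o,q)$ yields $\rank(o,q) - \ar(o,q) \le (1+\epsilon)L - (1+\tfrac{\epsilon}{2})L = \tfrac{\epsilon}{2}L$, and substituting the lower bound yields $\rank(o,q) - \ar(o,q) \ge L - (1+\tfrac{\epsilon}{2})L = -\tfrac{\epsilon}{2}L$. Combining these two inequalities gives $-\tfrac{\epsilon}{2}L \le \rank(o,q) - \ar(o,q) \le \tfrac{\epsilon}{2}L$, which is exactly $|\rank(o,q) - \ar(o,q)| \le \tfrac{\epsilon}{2}\times\LR(o,c)$, the claimed result.

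There is no genuine obstacle here: the corollary is an immediate algebraic consequence of having chosen the approximation constant $(1+\tfrac{\epsilon}{2})$ to be the midpoint of the guaranteed rank interval, so the argument is purely a substitution followed by elementary arithmetic. The only thing worth stating carefully is that the bound is tight in both directions (the worst case is realized when $\rank(o,q)$ sits at either endpoint of the Lemma~\ref{lemma:rank_bound} interval), which confirms that the choice of $\tfrac{\epsilon}{2}$ in Equation~\eqref{eqn:ar} is what minimizes the worst-case error and motivates the form of the estimate.
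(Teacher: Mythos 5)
Your proposal is correct and follows essentially the same route as the paper: both exploit the fact that $\ar(o,q)=(1+\epsilon/2)\LR(o,c)$ is the midpoint of the interval $[\LR(o,c),(1+\epsilon)\LR(o,c)]$ guaranteed by Lemma~\ref{lemma:rank_bound}, so the error is at most the half-width $\epsilon/2\times\LR(o,c)$. If anything, your two explicit one-line subtractions are cleaner than the paper's single-sentence justification, which loosely calls $\epsilon/2\times\LR(o,c)$ ``the average'' of the endpoints when it is in fact the half-width (the average being $\ar(o,q)$ itself).
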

\begin{proof}
Here, $\epsilon/2 \times \LR(o,c)$ is the average of
$\LR(o,c)$ and $(1+\epsilon) \times \LR(o,c)$.
Therefore, the proof follows from Lemma~\ref{lemma:rank_bound}.
\end{proof}

The approximate popularity $\ap(o,W)$ of an object $o$ for the
queries $q$ in a $W$ can be
computed using rank approximation as: 
\begin{equation} \label{eqn:ap}
\ap(o,W) = \frac{ \mathlarger{\sum_{i=1}^{|W|}}
   \left \{
\begin{array}{ll}
      N-\ar(o,q_i)+1 & \mbox{where } o \in O^+_{q_i} \\
      0 & \mbox{otherwise} \\
\end{array}
\right. }
{|W|}
\end{equation}

Next we present the algorithm to compute the top-$m$ objects with the
highest approximate popularity in the sliding window.
Later in Section~\ref{sec:error}, we show how to bound the approximation
error.

Updating a count based sliding window $W_i$ of queries from the
previous window $W_{i-1}$ can be formulated as replacing the least
recent query $qo$ by the most recent query $qn$ when the sliding
window shifts.
As a result, only the leaf level cells (in the quadtree) that contain
$qn$ and $qo$ need to be found, namely $c_{qn}$ and $c_{qo}$.
The rank lists corresponding to these cells can be quickly retrieved
from the \irr index.
For each window $W_i$, assume that $m+1$ objects with the highest
$\ap$ are computed, where the top $m$ objects are returned as the
result $\res_i$ of \cmpo for $W_i$, and the popularity of the
\mnth{(m+1)} object is used in the next window to identify the safe
rank and the validation objects efficiently.

\setlength{\algomargin}{1.2em} 
\begin{algorithm}[t]
\caption{{\sc \cmpo} 
}
\label{algo:approx}
\begin{small}
{\bf Input:} \\
\hspace{2mm} Window $W_i$, number of result objects $m$,
the result objects $\res_{i-1}$ of \\
\hspace{2mm} the previous window $W_{i-1}$, and
the \mnth{m+1} best popularity \\
\hspace{2mm} $\popmplusoneth$ of the previous window $W_{i-1}$.\\
{\bf Output:}~~Result objects $\res_i$ of the current window $W_i$.\\
Initialize a max-priority queue $PQ$\\
$\res_i \leftarrow \emptyset$\\
$qn \leftarrow W_i \backslash W_{i-1}$\\
$qo \leftarrow W_{i-1} \backslash W_{i}$\\
\For{$o \in \res_{i-1}$}
{
	 $\ap(o,W_{i-1}\backslash qo) \leftarrow \ap(o,W_{i-1}) - \frac{\displaystyle \contribution(\ar(o,qo))}{\displaystyle |W_{i}|}$
}
$\ap(o_m,W_{i-1}\backslash qo) \leftarrow$ the approximate popularity of top \mnth{m} object from $\res_{i-1}$ after updating for $qo$.\\

$\var{BSR} \leftarrow$ {\sc Block\_safe\_rank}($\popmplusoneth$,$\ap(o_m,W_{i-1}\backslash qo)$,$PQ$)\\
\For{$o \in \res_{i-1}$}
{
	$\ap(o,W_i) \leftarrow \ap(o,W_{i-1}\backslash qo) + \frac{\displaystyle \contribution(\ar(o,qn))}{\displaystyle |W|}$\\
	\uIf{$\ar(o,qn) \le \mbox{BSR}$ {\sc and} $o \in O^+_{qn} $}
	{
		$\res_i \leftarrow o$
	}
}

\uIf{$|\res_i|<m$}
{
	$\popmth \leftarrow$ current \mnth{m} best popularity of $\res_{i-1}$ in $W_i$. \\
	$\var{OSR} \leftarrow$ {\sc Object\_safe\_rank}($\popmplusoneth, \popmth, PQ$)\\
	\For{$o \in \res_{i-1} \backslash \res_i $}
	{
		\uIf{$\ar(o,qn) \le \mbox{OSR}$ {\sc and} $o \in O^+_{qn} $}
		{
			$\res_i \leftarrow o$
		}
	}
}

\uIf{$|\res_i|<m$}
{
	\var{VO} $\leftarrow$  {\sc Validation\_objects}($\popmth$, $PQ$)\\
	\uIf{VO $\ne \emptyset$}
	{	
		$\res_i$ $\leftarrow$ {\sc Update\_results}($\var{VO}, \res_{i-1} \backslash \res_i$)\\
	}	
}

{\sc return} $\res_i$
\end{small}
\end{algorithm}

The steps for updating the approximate solution of \cmpo for a window
$W_i$ are shown in Algorithm~\ref{algo:approx}.
Note that notation was previously defined in
Table~\ref{table:notation}.
Here, $\contribution(o,q)$ is the contribution of $q$ to the
popularity of $o$, and is computed as: 

$
\contribution(\ar(o,q)) = \left \{
\begin{array}{ll}
      N-\ar(o,q) + 1 & \mbox{where } o \in O^+_q\\
      0 & \mbox{otherwise} \\
\end{array}
\right. 
$

First, the approximate popularity of the result objects $o \in
\res_{i-1}$ for the excluded query $qo$ with $\ar(o,qo)$
is updated.
Let the updated \mnth{m} highest popularity from the set of
$\res_{i-1}$ be $\ap(o_m,W_{i-1}\backslash qo)$ (Lines 2.10 - 2.12 in
Algorithm~\ref{algo:approx}). 
The rest of the algorithm consists of three main components - (i)
computing the safe rank in two steps (block based and object based
safe rank), (ii) finding the set of validation objects, and (iii)
updating $\res_i$.

\myparagraph{Locating an object in {\irr}}
Since some of the steps in Algorithm~\ref{algo:approx} require
finding the entry of a particular object in a rank list in \irr, we
first present an efficient technique for locating objects,
and then describe the remaining steps of the approximation
algorithm.

We start with a lemma to find a relation between the minimum Euclidean
distance of the objects from a cell $c$ and the lower rank bounds of
the objects for any query in cell $c$.

\begin{lemma} \label{lemma:sorting}
For any two objects $o_i, o_j \in O$ and a cell $c$, if $\LR(o_i,c)
\le \LR(o_j,c)$, then $\mindist(o_i,c) \le \mindist(o_j,c)$ always holds.
\end{lemma}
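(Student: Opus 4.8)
The plan is to prove the contrapositive and reduce the claim to a monotonicity property of the lower rank bound in the minimum distance. First I would unfold the definition from Section~\ref{sec:rank_bounds}: by construction $\LR(o,c) = \ell_n(o) + 1$, where $\ell_n(o)$ is the number of objects $o' \in O\setminus\{o\}$ with $\maxdist(o',c) \le \mindist(o,c)$. Write $L(o)$ for this set of objects that are ``definitely closer'' to any query in $c$ than $o$ is. The goal then becomes a statement relating the sizes $|L(o_i)|$ and $|L(o_j)|$ to the quantities $\mindist(o_i,c)$ and $\mindist(o_j,c)$.

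The key step is a containment argument. Suppose $\mindist(o_j,c) < \mindist(o_i,c)$, i.e. $o_i$ is the farther object, and take any $o' \in L(o_j)$. Then $\maxdist(o',c) \le \mindist(o_j,c) < \mindist(o_i,c)$, so $o'$ also qualifies for membership in $L(o_i)$ as long as $o' \ne o_i$. The self-exclusion is the one point that needs care: if $o'$ were $o_i$ we would have $\maxdist(o_i,c) < \mindist(o_i,c)$, which is impossible because $\maxdist(o,c) \ge \mindist(o,c)$ holds for every object and cell. Hence $o' \ne o_i$ automatically and $o' \in L(o_i)$, giving $L(o_j) \subseteq L(o_i)$, and therefore $\ell_n(o_j) \le \ell_n(o_i)$, i.e. $\LR(o_j,c) \le \LR(o_i,c)$. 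This shows $\LR(\cdot,c)$ is non-decreasing in $\mindist(\cdot,c)$.

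The main obstacle is that this monotonicity is only non-strict: two objects with different $\mindist$ can share the same $\LR$ (an object can be strictly farther yet have no additional object whose $\maxdist$ squeezes below its $\mindist$), so the bare hypothesis $\LR(o_i,c) \le \LR(o_j,c)$ does not on its own force $\mindist(o_i,c) \le \mindist(o_j,c)$. I would close this gap exactly as the rank list is built in Section~\ref{sec:index}: when $\LR(o_i,c) = \LR(o_j,c)$, the tuples are ordered by $\mindist(o,c)$ as the secondary key, so the two notions of ``$o_i$ precedes $o_j$'' are reconciled precisely in the tie case. Thus the right reading of the lemma is that ordering objects by $\LR$ with $\mindist$ as tie-breaker coincides with ordering them by $\mindist$, i.e. $\mindist(o,c)$ is non-decreasing along each rank list.

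I expect the clean, citable part of the argument to be the containment step, since it is a short set-inclusion calculation that uses only $\mindist(o_j,c) < \mindist(o_i,c)$ and the universal inequality $\maxdist \ge \mindist$. The delicate part to state correctly is the equality case; I would make explicit that strict monotonicity can fail and that the secondary sort is what guarantees the ordering used downstream (for locating an object in \irr by binary search on $\mindist$) is consistent with the lemma.
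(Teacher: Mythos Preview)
Your approach is essentially the same as the paper's: both hinge on the containment $L(o_j)\subseteq L(o_i)$ under the assumption $\mindist(o_j,c)<\mindist(o_i,c)$, the paper phrasing it as a proof by contradiction and you as a contrapositive. Your treatment is in fact more careful on two points the paper glosses over: you explicitly dispose of the self-exclusion $o'\ne o_i$ via $\maxdist\ge\mindist$, and you correctly observe that the containment only yields $\LR(o_j,c)\le\LR(o_i,c)$, so the lemma as literally stated can fail in the tie case and is rescued by the secondary sort on $\mindist$ in the rank list. The paper's proof asserts a ``contradiction'' from $\ell_i\le\ell_j$ and $\ell_j\le\ell_i$ without addressing equality, so your reading of the lemma as a statement about the rank-list ordering (with the $\mindist$ tie-break) is the right way to make it precise.
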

\begin{proof}
We prove the lemma using proof by contradiction.
Assume that $\mindist(o_i,c) > \mindist(o_j,c)$ is true.
In the rank list of \irr, the entries $\langle o, \LR(o,c)\rangle$ are
sorted in ascending order of the lower rank bound, $\LR(o,c)$.
Here, $\LR(o,c)$ is the number of objects $o^\prime$ (plus 1) that
are guaranteed to be closer to $c$ than $o$.
So, $\maxdist(o^\prime, c) \le \mindist(o,c)$.
Let $\mid O^\prime_i \mid = \ell_i$ be the set of all the
objects from $O$ such that $\forall o^\prime_i \in O^\prime_i$,
$\maxdist(o^\prime_i, c) \le \mindist(o_i,c)$, and $\mid O^\prime_j
\mid = \ell_j$ be the set of objects where $\forall o^\prime_j \in
O^\prime_j$, $\maxdist(o^\prime_j, c) \le \mindist(o_j,c)$.
As $\LR(o_i,c) \le \LR(o_j,c)$, then $\ell_i \le \ell_j$ is also true.

Since $\mindist(o_i,c) > \mindist(o_j,c)$ was assumed to be true,
$\maxdist(o^\prime_j, c) \le \mindist(o_j,c) < \mindist(o_i,c)$.
Therefore, $o^\prime_i$ and $o^\prime_j$ both are in the set of
objects from $O$ that satisfy $\maxdist(o^\prime_i, c) \le
\mindist(o_i,c)$, and $\maxdist(o^\prime_j, c) < \mindist(o_i,c)$,
respectively.
Hence, $O^\prime_j \subseteq O^\prime_i$, so $\ell_j \le \ell_i$
must be true.
But this is a contradiction.
Therefore, $\mindist(o_i,c) > \mindist(o_j,c)$ cannot be true.
If $\LR(o_i,c) \le \LR(o_j,c)$ is true, $\mindist(o_i,c)
\le \mindist(o_j,c)$ must hold.
\end{proof}

Lemma~\ref{lemma:sorting} show that sorting the objects by the value
of $\LR(o,c)$ is equivalent to sorting the objects by their minimum
Euclidean distance to $c$, $\mindist(o,c)$.
If multiple objects have the same $\LR(o,c)$, they are already stored
as sorted by their $\mindist(o,c)$ as described in Sec.~{\ref{sec:index}}.
Therefore, we can locate an entry position of object $o$ in the rank
list of cell $c$ (in {\irr}) in three steps.\\ (1) Compute the
minimum Euclidean distance $\mindist(o,c)$ of $c$ from $o$.\\ (2)
Using an {\irr} as described in Sec.~\ref{sec:index}, each block $b$
of the rank list for cell $c$ is associated with the minimum distance
between $c_q$ and any object in $b$, $\mindist(b,c)$, so a binary
search on $\mindist(b,c)$ can be performed to find the position of
the block $b$ where $o$ is stored.\\ (3) Perform a linear scan in
that block to find the entry for $o$. 

\smallskip

The entire process has $\bigo(log_2(N/B) + B)$ time complexity, where
$B$ is the number of objects in a block.

\subsection{Safe rank}\label{subsec:saferanking}
Recall that in Algorithm~\ref{algo:approx} the purpose of finding a
safe rank is to minimize the number of updates in $\res_{i-1}$
(result objects in the previous window $W_{i-1}$) to get the result set
$\res_i$ (in the current window $W_i$) whenever the sliding window
shifts.
In particular, the idea is to compute the safe rank $\SR$ for the objects
$o \in \res_{i-1}$ such that, if $\ar(o,qn) < \SR$, then no other
object from $o^{\prime} \in O \backslash \res_{i-1}$ can have a
higher $\ap$ than $o$, thereby $o$ is a valid result in $\res_{i}$ as
well.
Note that a smaller value of rank implies a higher contribution in
the popularity measure.
The safe rank is defined w.r.t.
the current window $W_i$ by default.

Before presenting the computation of an object's safe rank,
the concept of {\bf popularity gain} of an object $o$ is introduced, which
results from replacing the least recent query $qo$ by the most recent
query $qn$, and is denoted by $\objgain$:
\begin{equation}\label{eqn:approx_gain}
\objgain = \ap(o,W_i) - \ap(o,W_{i-1}) =  \frac{\contribution(\ar(o,qn))  - \contribution(\ar(o,qo))}{|W_i|}
\end{equation}

Here, if $o$ does not satisfy the query constraint $\constraint(q)$,
the contribution of $q$ to the popularity of $o$,
$\contribution(\ar(o,qn)) = 0 $.

Let $\maxobjgain$ denote the maximum popularity gain among all
objects (in the current window $W_i$).
Then the popularity of any object $o^{\prime} \in O \backslash
\res_{i-1}$ can be at most $\popmplusoneth + \maxobjgain$, where
$\popmplusoneth$ is the \mnth{(m+1)} highest approximate popularity
in the previous window $W_{i-1}$.
In other words, if the updated popularity of an object $o \in
\res_{i-1}$ is higher (better) than $\popmplusoneth + \maxobjgain$,
then such an $o$ is guaranteed to remain in $\res_i$, which inspires
the design of the object-level safe rank $\SR$ shown in the following
equation: 
\begin{equation}\label{eqn:sr}
\popqo + \frac{N-\SR+1}{|W_i|} \ge \popmplusoneth + \maxobjgain
\end{equation}

Since a lower rank indicates a higher contribution to the popularity,
the gain will be maximized when the difference between $\ac(\ar(o,qn))$ and
$\ac(\ar(o,qo))$ is maximized.
Therefore, the goal of minimizing the updates of objects in $R_{i-1}$
(set at the beginning of this section) can be reduced to the
challenge of how to compute a tight estimation of $\maxobjgain$.

A naive approach to estimate $\maxobjgain$ is to overestimate
$\ac(\ar(o,qn))$ as $N-1+1$ (the rank of $o$ is ``1'' for $qn$)
and underestimate $\ac(\ar(o,qo))$ as ``0'' ($o$ does not satisfy
$\constraint(qo)$).
The safe rank $\SR$ for $W_i$ can then be computed using the naive
maximum gain value in Eqn.~\ref{eqn:sr}.
However, such an estimation of the maximum gain is too loose, and may
not have any pruning capacity, especially if the queries $qn$ and
$qo$ are close to each other (an object that is ranked very
high for $qn$ but ranked very low for $qo$ may not exist).

\subsubsection{Block-level popularity gain}
Since the objects are arranged blockwise in an \irr index, and
each object $o$ is sorted by its lower bound rank $\LR(o,c_q)$ in
ascending order, we are motivated to define and utilize a 
\emph{block-level gain} as the first step in finding a tighter
estimation of the maximum object-level gain.


In particular, a block-level maximum gain $\maxblkgain$ is computed,
such that $\maxblkgain \ge \maxobjgain$, which can be used to find
the block-level safe rank, $\BSR$.
If the rank of any result object is not better than $\BSR$ for $qn$,
then an object-level maximum gain, $\maxobjgain$ is computed.
The object-level safe rank $\OSR$ can be computed using this value,
where $\OSR \ge \BSR$, as a lower value of rank implies a higher
gain.
If the rank of any result object is still not safe, then the
validation objects (proposed in Sec.~\ref{subsec:vo}) must be checked
to decide if $\res_{i-1}$ needs to be updated.
Here, some part of the safe rank calculations can be reused to find
the validation objects, which will be explained in
Section~\ref{subsec:vo}.

Next, we define the block-level gain and propose a technique to
compute the block-level maximum gain, from which a block-level safe
rank $\BSR$ can be computed in the following section.

\myparagraph{Block-level gain computation}
\begin{figure}
\centering
\includegraphics[height=.85in]{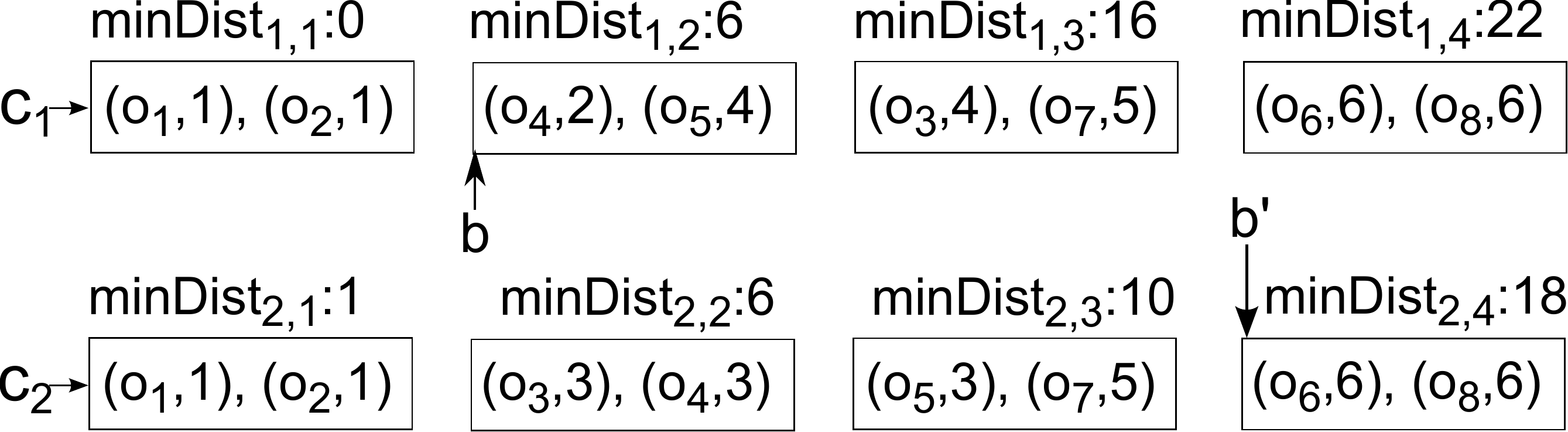}
\vspace{-6pt}
\caption{Upper bound rank computation of a block}
\label{fig:index}
\end{figure}

Given a block $b$ from the rank list of $qn$, the block-level gain
$\blkgain$ is an overestimation of the gain of the objects $o \in b$,
such that $\blkgain \ge \objgain$.
As the gain is maximized when the difference between $\ac(\ar(o,qn))$
and $\ac(\ar(o,qo))$ is maximized, a technique to compute $\blkgain$
can be actualized by finding: (i) a lower bound estimation of the rank of
any object $o \in b$ for $qn$, namely $\LBR(b,qn)$, where,
$\LBR(b,qn) \le \ar(o,qn)$; and (ii) an upper bound estimation of the
rank that any object $o \in b$ can have for $qo$, denoted as
$\UBR(b,qo)$, such that $\ar(o,qo) \le \UBR(b,qo)$.

Since the objects are sorted in ascending order of lower bound
ranks in the \irr index, the lower bound rank of the first entry of
$b$ is implicitly $\LBR(b,qn)$.
Here, $\forall o \in b, \LBR(b,qn) \le \LR(o,qn)$ holds by definition. 

Next, for the same block $b$ of the rank list of $qn$, finding the
maximum rank $\UBR(b,qo)$ that any object $o \in b$ can have for
$qo$ is needed.
To achieve this, a block $b^\prime$ is found such that all of the
objects $o \in b$ are guaranteed to be in the rank list of $qo$
before $b^\prime$.
As the objects are sorted by $\LR(o,c_{qo})$ in the rank list of
$c_{qo}$, $\LR(o^{\prime},c_{qo})$ is guaranteed to be greater
than that of any object in $b^\prime$, where $o^{\prime}$ is the
first entry of $b^\prime$.
Therefore, $\LR(o^{\prime},qo)$ is taken as the upper bound
estimation, $\UBR(b,qo)$.

For a tight estimation of $\UBR(b,qo)$, the block $b^\prime$ with the
smallest $\LR(o^{\prime},qo)$ must be found.
As the objects and blocks of a rank list are sorted by the minimum
Euclidean distance from the corresponding cell
(Section~\ref{sec:index}), and $\forall o \in b,$ $\mindist(o,c_{qo})
\le \maxdist(b,c_{qo})$, a binary search over the blocks of the rank
list of $qo$ is performed to find the first position of the block
$b^{\prime}$ where $\maxdist(b,c_{qo}) \le
\mindist(b^{\prime},c_{qo})$.
Here, $\maxdist(b,c_{qo})$ is computed as the maximum
Euclidean distance between the minimum bounding rectangle of the
objects $o \in b$ and cell $c_{qo}$.

\begin{example}
Computing the block gain is explained with the example in
Figure~\ref{fig:index}.
Let $c_1$ and $c_2$ be the cell where query $qn$ and $qo$ arrive
respectively.
Assume that the constraint of both queries are satisfied by all the
objects for ease of explanation.
Let $b = \langle (o_4,2), (o_5,4) \rangle$ be the block of the rank
list of $qn$ currently under consideration.
Here, $\LBR(b,qn) = 2$, which is the lower bound of the first entry
of $b$.
Let $\maxdist(b,c_2) = 14$, computed from the MBR of block $b$ and
cell $c_2$.
Now, as the objects and the blocks in the rank list of $c_2$ are sorted by their
minimum Euclidean distance from $c_2$, a binary search is performed with the value $14$ over the
$\mindist$ of the blocks in $c_2$.
Note that, $b$ is a block in the rank list of $c_1$, consisting of
the objects $o_4$ and $o_5$.
Here, we get $b^\prime = \langle (o_6,6), (o_8,6) \rangle$, as
$\mindist(b^\prime,c_2) = 18$, which is the smallest value of
$\mindist$ greater than $14$, shown with an arrow.
So, $\UBR(b,qo) = 6$ is the lower bound rank of the first entry of
$b^\prime$.
\end{example}

\subsubsection{Block-level safe rank}
\setlength{\algomargin}{1.2em}
\begin{algorithm}[t]
\caption{{\sc Block\_safe\_rank} }
\label{algo:bsr}
\begin{small}
{\bf Input:}\\ 
\hspace{2mm} $\popmplusoneth$ - \mnth{(m+1)} highest popularity of $W_{i-1}$,\\
\hspace{2mm} $\popqo$ - \mnth{m} highest popularity from $\res_{i-1}$ after updating \\
\hspace{2mm} for $qo$, and $PQ$ - a max-priority queue.\\
{\bf Output:}~~Block based safe rank - $\BSR$\\
$b \leftarrow$ first block in the rank list of $c_{qn}$.\\
$\maxblkgain \leftarrow 0$\\
\Do{$b$ cannot have a better gain than $\maxblkgain$}
{
$\LBR(b,qn) \leftarrow \LR(o,c_{qn})$ of the first entry $o$ from $b$.\\
$\maxdist(b,c_{qo}) \leftarrow$ Maximum Euclidean distance between $b$ and $c_{qo}$.\\
$b^{\prime} \leftarrow$ First position of the block of $c_{qo}$, where $\maxdist(b,c_{qo}) \le \mindist(b^{\prime},c_{qo})$.\\
$\UBR(b,qo) \leftarrow$ $\LR(o^{\prime},c_{qo})$ of the first entry
$o^{\prime}$ of $b^{\prime}$.\\
$\blkgain  \leftarrow \frac{\displaystyle \contribution(\LBR(b,qn)) - \contribution(\displaystyle \UBR(b,qo)) }{\displaystyle |W|} $\\
{\sc Enqueue} ($PQ,b,\blkgain)$\\
$\maxblkgain \leftarrow \gain_{top(PQ)}$\\
$b \leftarrow$ {\sc Next} ($c_{qn}$)\\
}
$\BSR$ $\leftarrow$ Compute from $\popmplusoneth$,$\popqo$, $\maxblkgain$ as Eqn.~\ref{eqn:block_safe_rank}.\\
{\sc return} $\BSR$
\end{small}
\end{algorithm}

By making use of the values $\LBR(b,qn)$ and $\UBR(b,qo)$ of block
$b$, a block-level estimation of the maximum gain for $W_i$ can found,
and a block-level safe rank $\BSR$ can be computed, as shown 
in Algorithm~\ref{algo:bsr}.
Algorithm~\ref{algo:bsr} shows the steps needed to compute the
block-level safe rank by finding the maximum gain of a block using
the rank lists of $qn$ and $qo$.
A max-priority queue $\var{PQ}$ is used to keep track of blocks that
must be visited, where the key is $\blkgain$.
Here, $\blkgain$ is an overestimation of the gain of the objects in
$b$.
For any object $o \in b$, $\objgain \le \blkgain$, is computed in
Line 3.13 as - 
$$
\blkgain  \leftarrow \frac{\displaystyle \contribution(\LBR(b,qn)) - \contribution(\displaystyle \UBR(b,qo)) }{\displaystyle |W_i|}
$$

Recall that in the \irr index, each object $o$ in the rank list is
sorted in ascending order of the lower bound rank w.r.t.
the cell $c_{qn}$, and the traversal starts from the beginning of the
rank list of $c_{qn}$ so that the objects with a higher gain are most
likely to be explored first.
The traversal continues until the subsequent blocks of the rank lists
of $qn$ cannot have a better gain than the current maximum gain
$\maxblkgain$ found so far.
Here, the {\bf terminating condition} of Line 3.17 is: $$
\frac{\contribution(\LBR(b,qn))}{|W_i|} < \maxblkgain
$$
 
Lastly, in Line 3.18 the maximum gain value $\maxblkgain$ is used to
compute the block-level safe rank as follows: 
\begin{equation}\label{eqn:block_safe_rank}
\ap(o_m,W_{i-1}\backslash qo) + \frac{N-\BSR+1}{|W_i|} \ge \popmplusoneth + \maxblkgain
\end{equation}

\subsubsection{Object-level safe rank}
\setlength{\algomargin}{1.2em} 
\begin{algorithm}[t]
\caption{{\sc Object\_safe\_rank} }
\label{algo:osr}
\begin{small}
{\bf Input:}\\
\hspace{2mm} $\popmplusoneth$ - \mnth{(m+1)} highest popularity of $W_{i-1}$,\\
\hspace{2mm} $\popqo$ - the updated \mnth{m} highest popularity from $\res_{i-1}$\\
\hspace{2mm} after removing $qo$, and $PQ$ - a max-priority queue from 
\hspace{2mm} {\sc Block\_safe\_rank}. \\
{\bf Output:}~~Object-level safe rank, $\OSR$\\
\While{$PQ$ not empty}
{
	$E \leftarrow$ {\sc Dequeue} ($PQ$)\\
	\uIf{E is object}
	{
		$\maxobjgain \leftarrow \gain_E$\\
		{\sc break}\\
	}
	\Else
	{
		\For{o in E}
		{
			$\objgain \leftarrow \frac{\displaystyle \contribution(\ar(o,qn)) - \contribution(\ar(o,qo)) }{\displaystyle |W|} $\\
			{\sc Enqueue} ($PQ,o,\objgain$)\\
		}
	}
}
$\OSR$ $\leftarrow$ Compute from $\popmth$,$\popqo$,$\maxobjgain$ (by \zf{Eqn.~\ref{eqn:sr}}).\\
{\sc return} $OSR$
\end{small}
\end{algorithm}

If the rank of any object $o \in \res_{i-1}$ for $qn$ is not smaller
(better) than the block-level safe rank $\BSR$, then the object-level
safe rank is computed, where $\OSR \ge \BSR$ is used to further
determine whether the result needs to updated or not (Lines 2.18 -
2.23 in Algorithm~\ref{algo:approx}).

Algorithm~\ref{algo:osr} shows a best-first approach to compute the
maximum object gain using the same priority queue $\var{PQ}$
maintained in the block-level computation.
In each iteration, the top element $E$ of $\var{PQ}$ is dequeued from
$\var{PQ}$.
If $E$ is a block, the approximate rank of each object $o \in E$ for
$qn$ and $qo$ is computed using the corresponding lower bound rank
in the rank lists.
The objects are then enqueued in $\var{PQ}$, and indexed by the gain
computed using Eqn.~\ref{eqn:approx_gain}.
If $E$ is an object, then the gain is returned as the maximum object
level gain $\maxobjgain$ (Lines 4.8 - 4.9).
The object-level safe rank, $\OSR$, is then computed in the same manner
as Eqn.~\ref{eqn:block_safe_rank} with the value $\maxobjgain$.

\subsection{Validation objects}\label{subsec:vo}
If the rank of any object $o \in \res_{i-1}$ is not safe, a
set of validation objects $\var{VO}$ is found such that, as long as
$\forall vo \in \var{VO}$, $\ap(o,W_i) \ge \ap(vo,W_i)$, $o$ is a
valid result object of $\res_i$.
We present an efficient approach to incrementally identify $\var{VO}$.
Furthermore, we show that if the result needs to be updated, the new
result objects also must come from $\var{VO}$.

First, after a new query $qn$ arrives, the approximate rank for
each object $o \in \res_{i-1}$ is computed, and the appropriate
popularity scores are updated.
Let the updated \mnth{m} highest approximate popularity from
$\res_{i-1}$ be $\popmth$ (Line 2.19 of Algorithm~\ref{algo:approx}).
The priority queue $\var{PQ}$ maintained for safe rank computation is
used to find the set $\var{VO}$ of validation objects, where
$\popmth$ is used as a threshold to terminate the search.

A best-first search is performed using $\var{PQ}$ to find the objects
that have gain high enough to be a result.
Specifically, if the dequeued element $E$ from $\var{PQ}$ is a block,
the $\ar$ of each object $o$ in $E$ is computed for $qn$ and $qo$ in
the same manner as described for the object-level safe rank
computation.
As the popularity of an object $o \in O \backslash \res_{i-1}$ can be
at most $\popmplusoneth + \objgain$, an object $o$ is included in the
validation set if $o$ satisfies the following condition:
\begin{equation}\label{eqn:vo}
\popmplusoneth + \objgain \ge \popmth
\end{equation}

As $\var{PQ}$ is a max-priority queue which is maintained for the
gain of the objects and the blocks, the process can be safely
terminated when the gain of a dequeued element $E$ does not satisfy
the condition in Equation~\ref{eqn:vo}.

If no validation object is found, this implies that there is no object
that can have a higher popularity than the current results.
In this case, the result set $\res_{i-1}$ (of previous window
$W_{i-1}$) remains unchanged, and is the result of current window
$W_i$.
Otherwise, the popularity of each object in $\res_{i-1} \backslash
\res_i$ needs to be checked against the popularity of the validation
objects $vo \in \var{VO}$ to update the result.

\subsubsection{Updating results}
As described in Section~\ref{subsec:vo}, the set of validation
objects $\var{VO}$ is computed such that no object $o \backslash
\var{VO}$ can have a higher popularity than any of the objects in
$\res_{i-1}$.
Therefore, only objects in $\var{VO}$ are considered when
updating the result set.
To update the results using the objects $vo \in \var{VO}$,
the popularity of $vo$ for the current window must be computed.
Therefore, an efficient technique to compute the popularity of the 
validation objects is now presented.

\myparagraph{Computing $\ap$ of the validation objects}
\label{subsec:pop_compute}
As the popularity gain of each $vo \in \var{VO}$ has already been
computed as described in Section~\ref{subsec:vo}, it is sufficient to
find the $\ap(vo,W_{i-1} \backslash qo)$ and use it to compute
$\ap(vo,W_i)$.
Since the popularity of every object for every window is not
computed, a straightforward way to compute $\ap(vo,W_{i-1} \backslash
qo)$ is to find the rank of $vo$ for each $q \in W_{i-1} \backslash
qo$ using the corresponding rank lists.
However, this approach is computationally expensive, especially when
the window size is large.
Moreover, if $vo$ was a validation object or a result object in a
prior window $W_{i-y}$, then the same computations are repeated
unnecessarily for the queries shared by the windows (the queries
contained in $W_i \cap W_{i-y}$).

Therefore, if $\ap$ of a result or a validation object is
computed for a window $W_{i-y}$, the aim is to reuse this computation
for later windows in an efficient way.
This can be accomplished by storing the popularity of a subset of
``necessary'' objects from prior windows for later reuse.
We show that the choice of these limited number of windows is
optimal, and storing the popularity for any additional windows cannot
reduce the computational cost any further.

\myparagraph{Choosing the limited number of prior windows}
The popularity computations can be reused if the number of
shared queries among the windows is greater than the number of
queries that differ. 
Otherwise, the popularity must be computed for the window $W_i$ from
scratch rather than reusing the popularity computations from
$W_{i-y}$.
Specifically, let $Y$ be the number of shared queries among windows
$W_i$, $W_{i-y}$ ($Y = |W_i \cap W_{i-y}|$), $Q_o = W_{i-y}
\backslash W_i$, and $Q_n = W_i \backslash W_{i-y}$.
So in a count based window, $|Q_n| = |W_i| - Y$ and $|Q_o| = |W_i| -
Y$, as each time the sliding window shifts, a new query is inserted
and the least recent query is removed from the window.
If the number of computations required for the shared queries is
greater than the number of computations for $|Q_n| + |Q_o|$, i.e., $Y
\ge 2(|W_i|-Y)$, then computations can be reused.
So the number of shared queries, $Y$, should be greater than or equal
to $2|W_i|/3$ for efficient reuse.

\myparagraph{Reusing popularity computations}
If the condition $Y \ge 2|W_i|/3$ holds, the popularity of an object
$o$ computed for $W_{i-y}$ can be used as $\ap(o,W_i)$ as follows: 
\begin{equation}\label{eqn:reuse}
\begin{aligned}
\ap(o,W_i) = \ap(o,W_{i-y}) + \\
\frac{\sum_{qn \in Q_n} \contribution(\ar(o,qn)) - \sum_{qo \in Q_o} \contribution(\ar(o,qo))}{|W_i|}
\end{aligned}
\end{equation}

\myparagraph{Popularity lookup table}
A popularity lookup table is maintained with the popularity of the
result and validation objects for the most recent $2|W_i|/3$ windows.
If a validation object $vo$ of the current window $W_i$ is found in
the lookup table, the popularity is computed using Equation~\ref{eqn:reuse}.
Otherwise, the popularity of $vo$ is computed from the rank lists of
the queries in $W_i$.
The popularity $vo$ for $W_i$ is then added to the popularity lookup
table for later windows.

\myparagraph{Obtaining Results}
The objects $vo \in \var{VO}$ are considered one by one to update the
results.
After computing the popularity of an object $vo \in \var{VO}$, if
$\ap(vo,W_i) > \popmth$, then $vo$ is added to $\res_i$.
The set $\res_i$ is adjusted such that it contains $m$ objects with the highest $\ap$, and the value of $\popmth$
is adjusted accordingly. 
In this process, if the overestimated popularity of an object $vo$
computed with Eqn~\ref{eqn:vo} is less than the updated $\popmth$, 
that object can be safely discarded from consideration without
computing its popularity.

\subsection{Approximation error bound}\label{sec:error}
In this section we present the bound for approximation error of our
proposed approach.
Specifically we show that, for any object $o \in O$, and any window
$W$ of queries, the ratio between $\ap(o,W)$ and $\rho(o,W)$ is
bounded. 

\begin{lemma}\label{lemma:pop_error}
For any object $o \in O$, and a window $W$ of queries, the
approximation ratio is bounded by $1-\epsilon/2N$.\\ (i)
$\ap(o,W)/\rho(o,W)
\le 1-\epsilon/2N$ when $\rho(o,W) \ge \ap(o,W)$; and \\ (ii)
$\rho(o,W)/\ap(o,W) \le 1-\epsilon/2N$ when $ \ap(o,W) \ge \rho(o,W)$
always holds.
\end{lemma}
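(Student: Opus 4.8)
The plan is to reduce the aggregate bound to a per-query statement about the contribution of each query to the popularity, and then aggregate over the window. First I would observe that whether an object $o$ satisfies the constraint $\constraint(q_i)$ does not depend on the rank approximation, so the set of queries contributing a nonzero term is identical for $\rho(o,W)$ and $\ap(o,W)$; only the magnitude of each nonzero contribution differs. Writing $S^+ = \{i : o \in O^+_{q_i}\}$, the two popularities differ only in the terms $N-\rank(o,q_i)+1$ versus $N-\ar(o,q_i)+1$, and the per-query difference is exactly $\ar(o,q_i)-\rank(o,q_i)$. This isolates the entire error into a sum of per-query rank discrepancies.

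Next I would invoke Corollary~\ref{lemma:rank_error1}, which gives $|\rank(o,q_i)-\ar(o,q_i)| \le (\epsilon/2)\,\LR(o,c_i)$ for the cell $c_i$ in which $q_i$ falls. The true contribution of query $q_i$ is $N-\rank(o,q_i)+1$, which is at least $1$ and at most $N$ (attained at rank one, since $\LR(o,c_i)\le\rank(o,q_i)\le N$ by Lemma~\ref{lemma:rank_bound}). The key quantity to control is therefore the relative error of a single contribution and its aggregate over $S^+$. Measuring the per-query error $(\epsilon/2)\LR(o,c_i)$ against the maximum attainable contribution $N$ yields the target factor $\epsilon/2N$; summing the nonzero terms and dividing by $|W|$ then transfers this per-query estimate to the aggregate ratio $\ap(o,W)/\rho(o,W)$.

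The two cases in the statement are symmetric. Case (i) handles $\rho(o,W) \ge \ap(o,W)$, which corresponds to the approximation overestimating ranks on average (hence underestimating popularity), while case (ii) handles $\ap(o,W) \ge \rho(o,W)$. In each case I would place the larger quantity in the denominator, so the ratio is at most one, and then bound the deficit from one by $\epsilon/2N$ using the per-query estimate above. Because the inequality $|\rank-\ar| \le (\epsilon/2)\LR$ from Corollary~\ref{lemma:rank_error1} is two-sided, the same estimate covers both signs of $\ar(o,q_i)-\rank(o,q_i)$, so a single argument suffices for both directions.

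The hard part will be obtaining the clean per-query relative bound. The absolute contribution error $(\epsilon/2)\LR(o,c_i)$ is largest precisely for low-ranked objects (large $\LR$), whereas their contribution $N-\rank+1$ is smallest, so a naive term-by-term ratio against the object's own contribution does not collapse to $\epsilon/2N$. The constant emerges cleanly only when the error is normalized against the maximum possible contribution $N$ rather than against the individual contribution, and I would need to argue carefully—using $\LR(o,c_i)\le\rank(o,q_i)\le N$ together with Lemma~\ref{lemma:rank_bound}—that this normalization is the correct one and that aggregating over $S^+$ and dividing by $|W|$ preserves, rather than loosens, the factor $\epsilon/2N$.
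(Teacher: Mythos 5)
Your outline goes in the opposite direction from what the lemma (and the paper's appendix proof) actually asserts. The stated inequality $\ap(o,W)/\rho(o,W) \le 1-\epsilon/2N$ says the deficit of the ratio from one is at \emph{least} $\epsilon/2N$; it is a \emph{lower} bound on the aggregate discrepancy $\rho(o,W)-\ap(o,W)$ relative to $\rho(o,W)$. The only quantitative tool in your plan, the two-sided bound $|\rank(o,q_i)-\ar(o,q_i)|\le (\epsilon/2)\LR(o,c_i)$ from Corollary~\ref{lemma:rank_error1}, is an \emph{upper} bound on each per-query error, so it can only ever bound the deficit from above; no amount of summing over $S^+$ and dividing by $|W|$ will turn it into the claimed $\le$. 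The paper obtains the stated direction by a different mechanism: it pins the true rank to an endpoint, taking $\rank(o,q_i)=\LR(o,c_i)$ (case i) or $\rank(o,q_i)=\UR(o,c_i)$ (case ii) for \emph{every} query, so that the aggregate difference is exactly $(\epsilon/2)\,\lambda/|W|$ with $\lambda=\sum_i\LR(o,c_i)$; it then writes $\rho(o,W)/\bigl(\rho(o,W)-\ap(o,W)\bigr)=\bigl(|W|N+|W|-\lambda\bigr)/\bigl((\epsilon/2)\lambda\bigr)$ and maximizes over the admissible range $|W|\le\lambda\le |W|\,N$ (the maximum sits at $\lambda=|W|$), giving $\rho/(\rho-\ap)\le 2N/\epsilon$ and hence the claim. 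None of these ingredients --- the extremal configuration, the aggregate variable $\lambda$, or the monotonicity of the ratio in $\lambda$ --- appears in your outline, and without them you cannot recover the inequality as stated.

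Even if the target were the reverse inequality (ratio at least $1-\epsilon/2N$, the more natural reading of an approximation guarantee), your plan does not close, and your final paragraph essentially concedes this. Normalizing the per-query error $(\epsilon/2)\LR(o,c_i)$ by the maximum contribution $N$ yields $\epsilon\,\LR(o,c_i)/(2N)$, which equals $\epsilon/2N$ only when $\LR(o,c_i)=1$ and grows toward $\epsilon/2$ as $\LR(o,c_i)$ approaches $N$. The aggregation step you describe (``summing the nonzero terms and dividing by $|W|$ \ldots preserves the factor'') is precisely the missing argument, and it fails in general because the queries carrying the largest absolute rank error are exactly those contributing the least to $\rho(o,W)$. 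The gap you flag as ``the hard part'' is therefore not a detail to be checked later --- it is the entire content of the bound. In the paper the constant $\epsilon/2N$ is produced by the substitution $\lambda=|W|$ in the closed-form ratio above, not by any per-query normalization.
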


\begin{proof}
See Appendix~\ref{appendix:proof} 
\end{proof}

\section{Experimental Evaluation}\label{sec:exp}
In this section, we present the experimental evaluation for our
proposed approach to monitor the top-$m$ popular objects in a sliding
window of streaming queries.
As there is no prior work that directly answers this problem
(Section~\ref{sec:related}), we compare our approximate solution
(proposed in Section~\ref{sec:approx_method}), denoted by {\AP}, with
the baseline exact approach (proposed in Section~\ref{sec:baseline}),
denoted by {\BS}.

\subsection{Experiment Settings}
\myparagraph{Datasets and query generation}\label{subsec:dataset}
All experiments were conducted using two real datasets, (i) {\Auss}
dataset at a city scale and (ii)
{\Foursq}\footnote{\url{https://sites.google.com/site/yangdingqi/home/foursquare-dataset}}
dataset at a country scale.

The {\Auss} dataset contains $52,913$ real estate properties sold in
a major Metropolitan city in Australia between 2013 to 2015, collected
from the online real estate advertising
site\footnote{\url{http://www.realestate.com.au}}.
As a property can be sold multiple times over the period,
only the first sale was retained in the dataset.
The locations of the queries in the {\Auss} dataset were created
by using locations of $987$ facilities (train stations, schools,
hospitals, supermarkets, and shopping centers) in this region.
We generated two sets of queries from these locations, each of size
$20K$. Repeating queries were created using two different approaches:
(i) uniform; and (ii) skewed distribution respectively.
We denote the uniform and the skewed query set as \method{U} and 
\method{S}, respectively.
The radius of the queries are varied as an experimental parameter,
and is discussed further in Section~\ref{subsec:evaluation}.

The {\Foursq} dataset contains $304,133$ points of interest (POI) from
Foursquare\footnote{\url{https://foursquare.com}} in $34$ cities
spread throughout the USA.
The queries for the {\Foursq} dataset were generated using the user
check-ins.
From the check-ins of each user, we generated a query, where the
query location was the centroid of all the check-ins of that user,
and the query radius was set as the minimum distance that covers
these check-ins.
If a user has only one check-in record, we set the query location as
the check-in location, and the radius of the query is randomly
assigned from another user.
As a result, a total of $22,442$ queries were generated for the
{\Foursq} dataset.

 \begin{figure}
\centering
\subfloat[Aus]{\includegraphics[trim = 60mm 10mm 65mm 10mm, clip, height=1.56in]{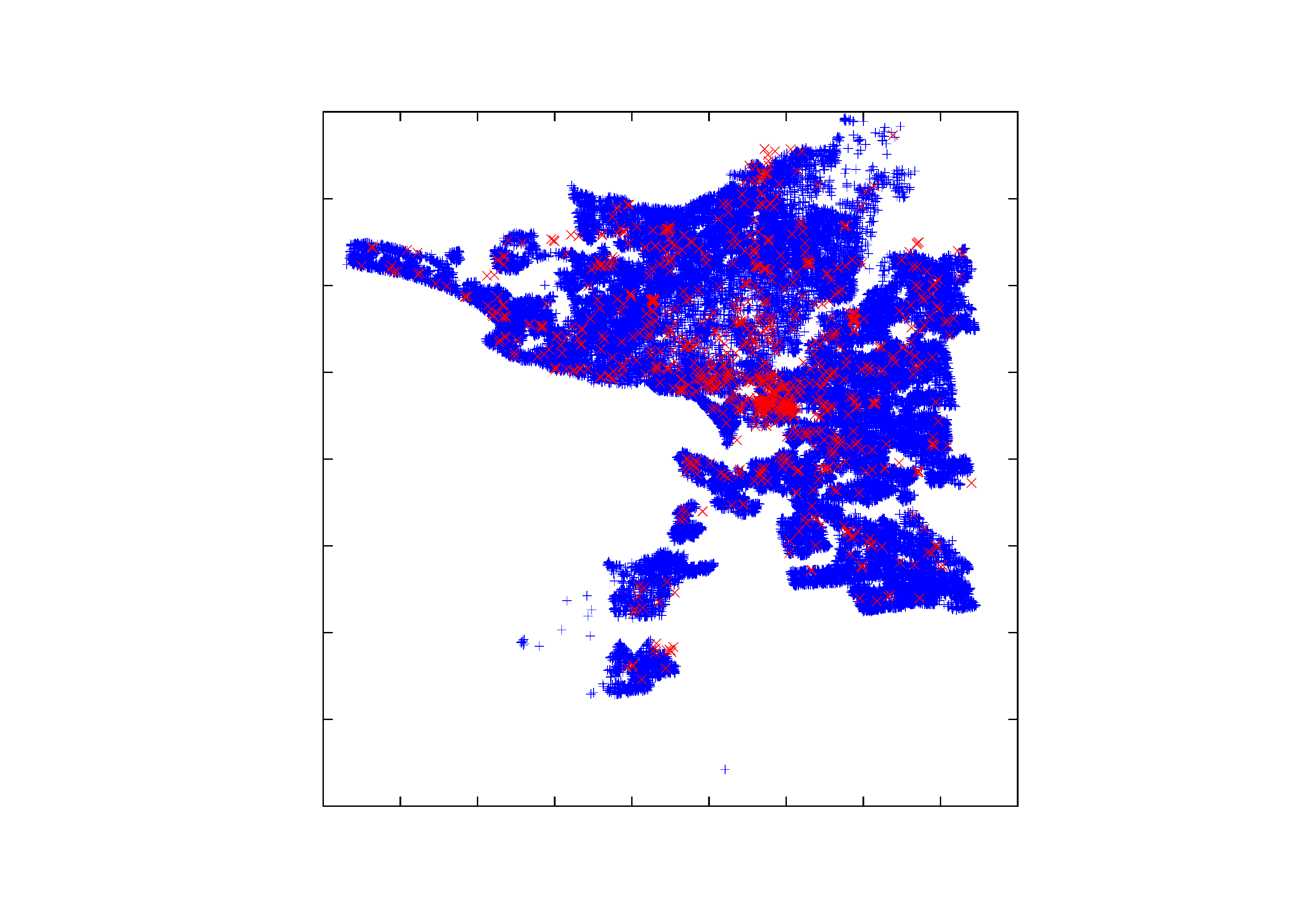}\label{fig:dataset1}}\hfill
\subfloat[Foursq]{\includegraphics[trim = 40mm 10mm 40mm 10mm, clip,height=1.6in]{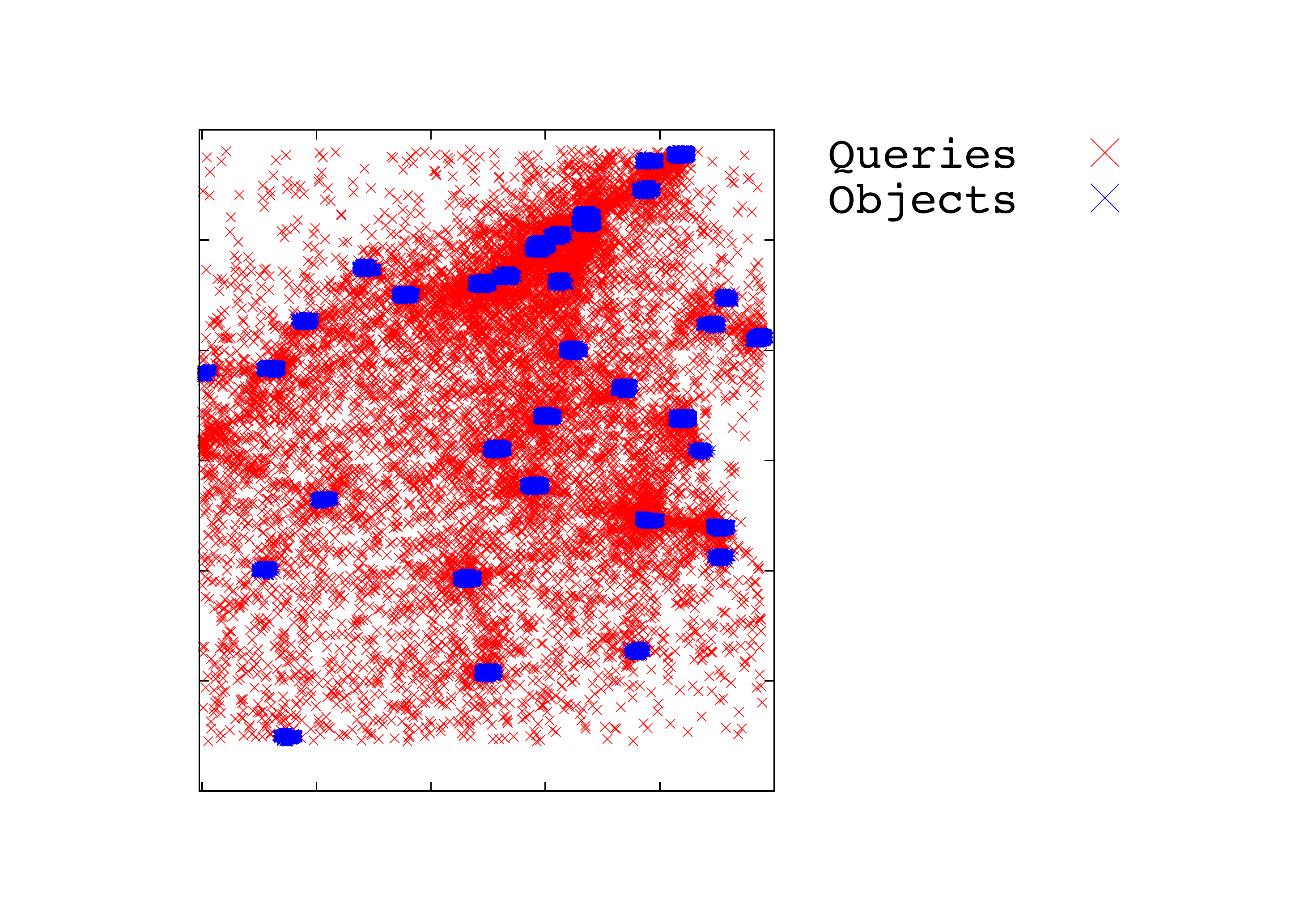}\label{fig:dataset2}}
\vspace{-10pt}
\caption{Dataset and query locations}
\vspace{-6pt}
\label{fig:dataset}
\end{figure}

Since {\Auss} dataset represents a real city-level data and queries are real facilities in that city, it is the best candidate for effectiveness study; as a result we conducted efficiency and effectiveness study on {\Auss}. Since the queries generated for {\Foursq} spread over the whole country, we find that it is more suitable for the efficiency and scalability study; nonetheless, we conducted the effectiveness study for {\Foursq} and most experiment results are shown in Appendix~\ref{appendix:exp}.

After initializing the sliding window, we evaluated the performance
of both {\BS} and {\AP} approaches for $10K$ query arrivals in the
stream, i.e., $10K$ shifts of the sliding window.
We repeated the process $50$ times, and report the mean performance.
For the {\Auss} dataset, the arrival order of the queries was
randomly generated.
For the {\Foursq} dataset, the arrival order of a query in the stream was
obtained from the most recent check-in time of the corresponding
user.
Figure~\ref{fig:dataset} shows the location distribution of the
objects, and the queries for both datasets, where the blue and the red
points represent object locations and query locations respectively.
Note that, for the {\Foursq} dataset, the POIs are clustered in large
cities (i.e., blue clusters).
As a user may check-in in different cities, the queries (which are
the centroid of the check-in locations) are distributed in different
locations across the US.

\begin{table}[h]
\caption{Parameters}
\vspace{-8pt}
\centering
\begin{tabular}{ll}
\toprule
{\bfseries Parameter} & {\bfseries Range} \\
\midrule
$W$ & $100,200,{\bf 400},800,1600$\\
$m$ & $1,5,{\bf 10},20,50,100$\\
Query radius (\%) & $1,2,{\bf 4},8,16$\\
$\epsilon$ & $1,2,{\bf 3},4,5$\\
$B$ & $32,64,{\bf 128},256,512$\\
\bottomrule
\end{tabular}
\label{table:param}
\end{table}

\myparagraph{Evaluation Metrics \& Parameter}
We studied the efficiency, scalability and effectiveness for both the
baseline approach ({\BS}), and the approximate approach ({\AP}) by
varying several parameters.
The parameter of interest and their ranges are listed in
Table~\ref{table:param}, where the values in bold represent the
default values.
For all experiments, a single parameter varied while keeping the rest
as the default settings.
For efficiency and scalability, we studied the impact of each
parameter on: the number of objects whose popularity are computed per
query (OPQ), to update the answer of \cmpo; and the runtime per query
(RPQ).

In order to measure the \emph{effectiveness} of our approximate
approach, the impact of each parameter on the following two metrics
are studied:
\begin{enumerate}[leftmargin=1em]
\itemsep 0em
	\item \textbf{Approximation ratio}: For a window $W$, for
	each $o_i \in \res$, $o^{\prime}_i \in \hat{\res}$, where $i$
	is the corresponding position of the object in the top-$m$
	results, we compute the approximation ratio as -
	$$ ratio = max \left( \frac{
	\hat{\rho}(o^{\prime}_i,W)}{\rho(o_i,W)} ,
	\frac{\rho(o_i,W)}{ \hat{\rho}(o^{\prime}_i,W)} \right) $$
	\noi We report the average approximation ratio of the sliding
	window by varying different parameters.
	As the approximate popularity of an object is an aggregation
	over the estimated ranks, the approximation ratio may not be
	``1'' (the best approximation ratio) even if the approximate
	result object list $\hat{\res}$ is exactly the same as that
	result list returned by the baseline.
	Therefore, we present the following metric to demonstrate the
	similarity of the approximate result object lists with the
	baseline.

	\item \textbf{Percentage of result overlap}: For a window
	$W$, let $|\res| = |O|$, where $\res$ is the sorted list of
	all of the objects according to their exact popularity.
	We report the similarity between the result list returned by
	the approximate approach, $\hat{\res}$ with $\res$ at
	different depths.
	Specifically, for each result object $o^{\prime}_i \in
	\hat{\res}$, where $|\hat{\res}| = m$, we record the
	percentage of objects in $\hat{\res}$, overlapping with the
	top-$k$ objects of $\res$, where $k$ is varied from $10$ to
	$200$.
	For instance, when $m =50$, we compute how many objects in
	the top-$50$ approximate result that also appear in the
	top-$50$, top-$75$, $\dots$, top-$150$ exact results.
	We report the percentage of the shared objects for different
	choices of $k$, averaged by $10,000$ shifts of the sliding
	window.
\end{enumerate}

\myparagraph{Setup}
All indexes and algorithms were implemented in C++.
The experiments were ran on a $24$ core Intel Xeon $E5-2630$ running
at $2.3$~GHz using {\gb{256}} of RAM, and $1$TB $6$G SAS $7.2$K rpm
SFF ($2.5$-inch) SC Midline disk drives.
All index structures are memory resident.



\subsection{Efficiency \& Scalability Evaluation}\label{subsec:evaluation}

\begin{figure}[t]
\centering
\subfloat[Objects computed]{\includegraphics[angle =-90,trim = 20mm 60mm 20mm 50mm, clip,width=0.25\textwidth]{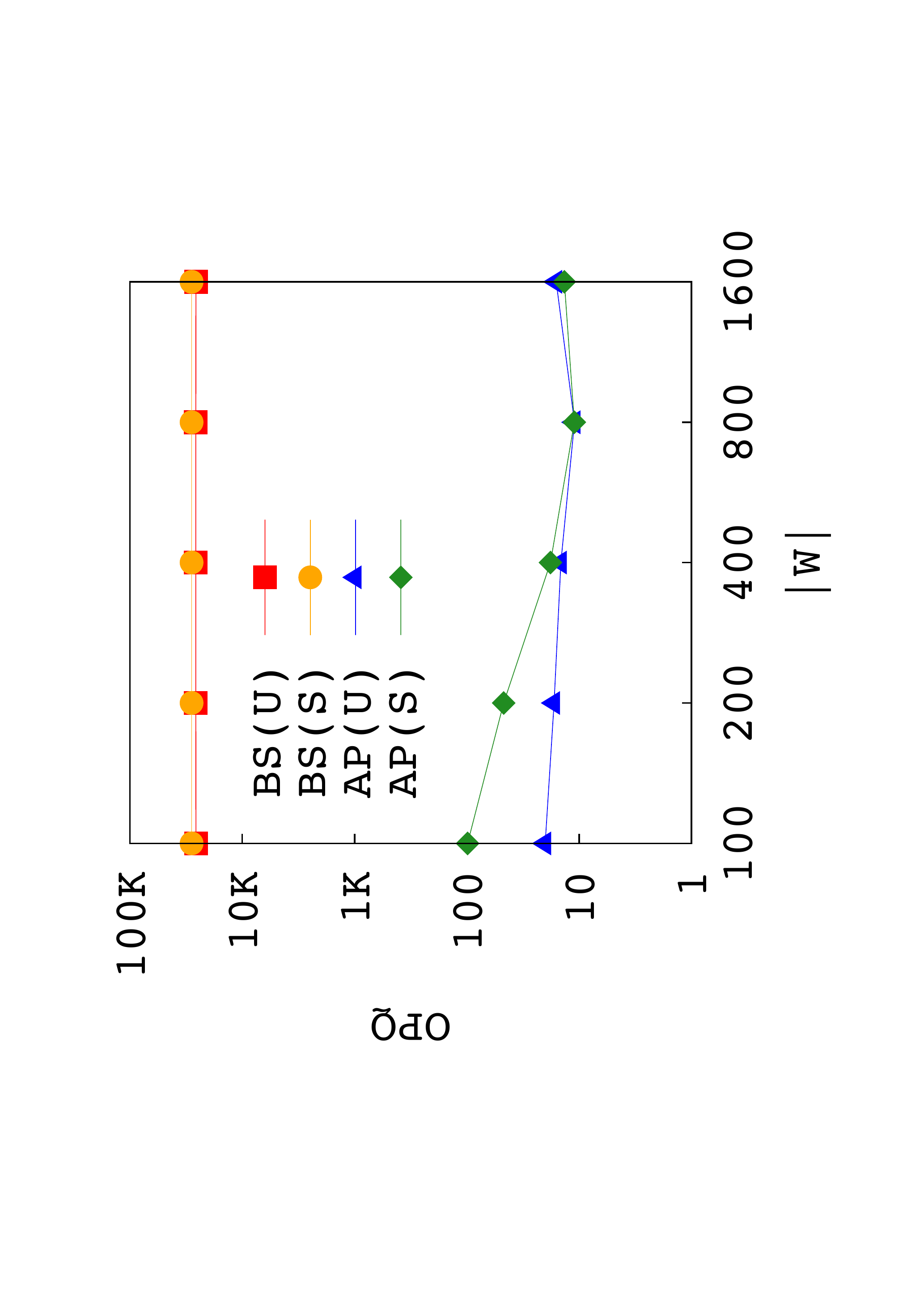}\label{fig:ow}}
\subfloat[Runtime]{\includegraphics[angle =-90,trim = 20mm 60mm 20mm 50mm, clip,width=0.25\textwidth]{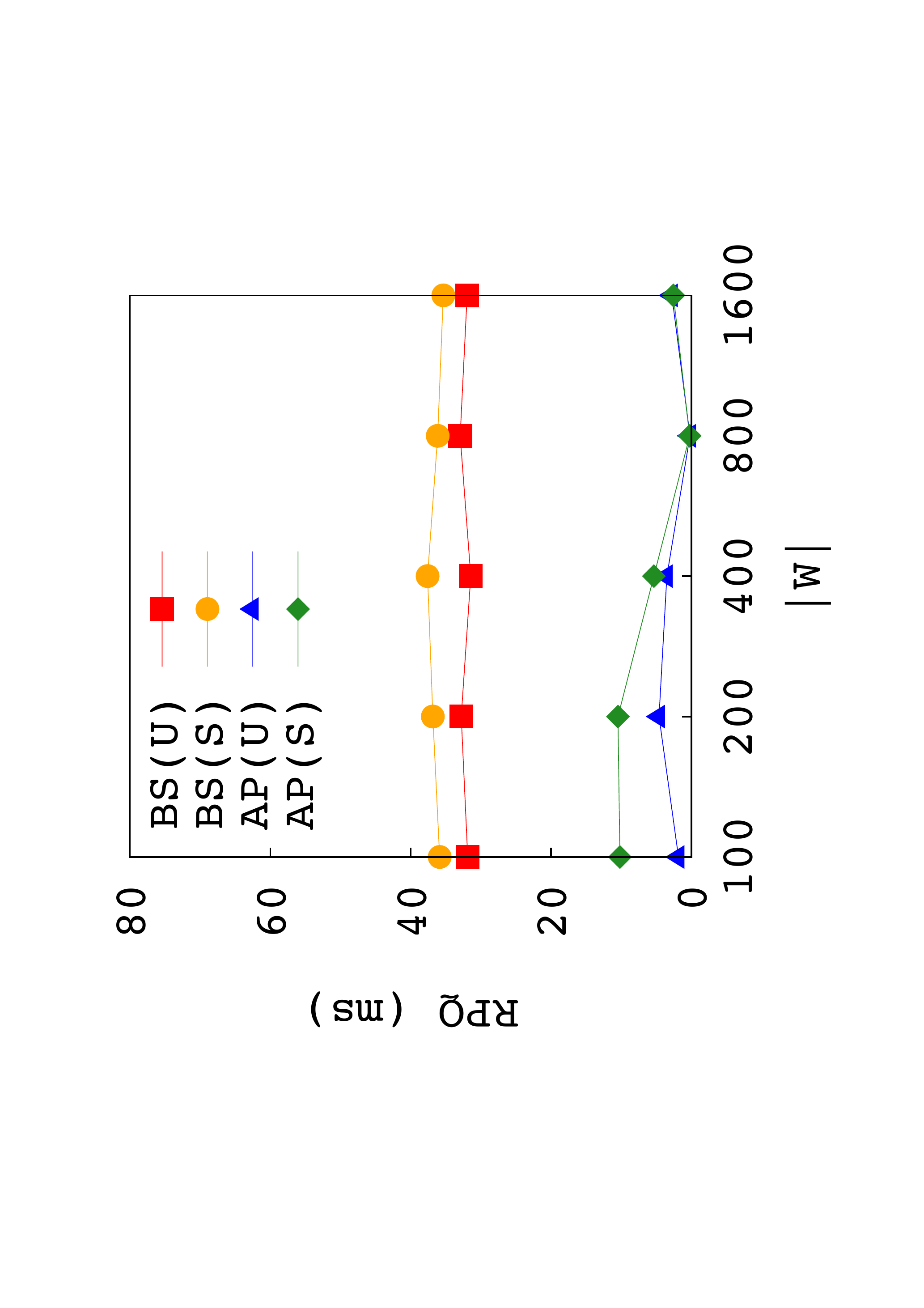}\label{fig:tw}}
\vspace{-8pt}
\caption{Effect of varying $|W|$ on \Auss dataset}
\vspace{-8pt}
\label{fig:w}
\end{figure}

\begin{figure}[t]
\centering
\subfloat[Runtime]{\includegraphics[angle =-90,trim = 20mm 60mm 20mm 50mm, clip,width=0.25\textwidth]{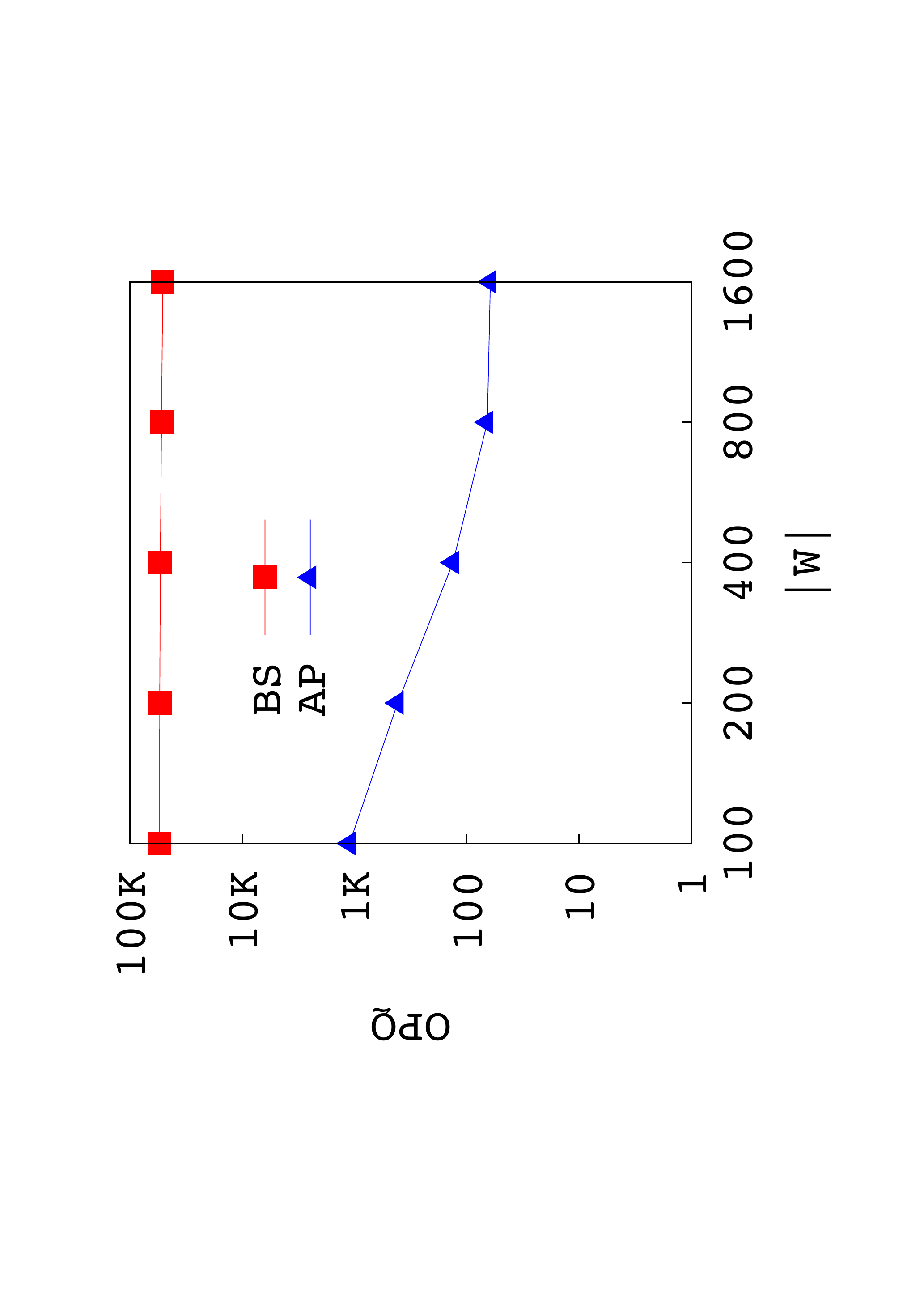}\label{fig:ow_foursq}}
\subfloat[Objects computed]{\includegraphics[angle =-90,trim = 20mm 60mm 20mm 50mm, clip,width=0.25\textwidth]{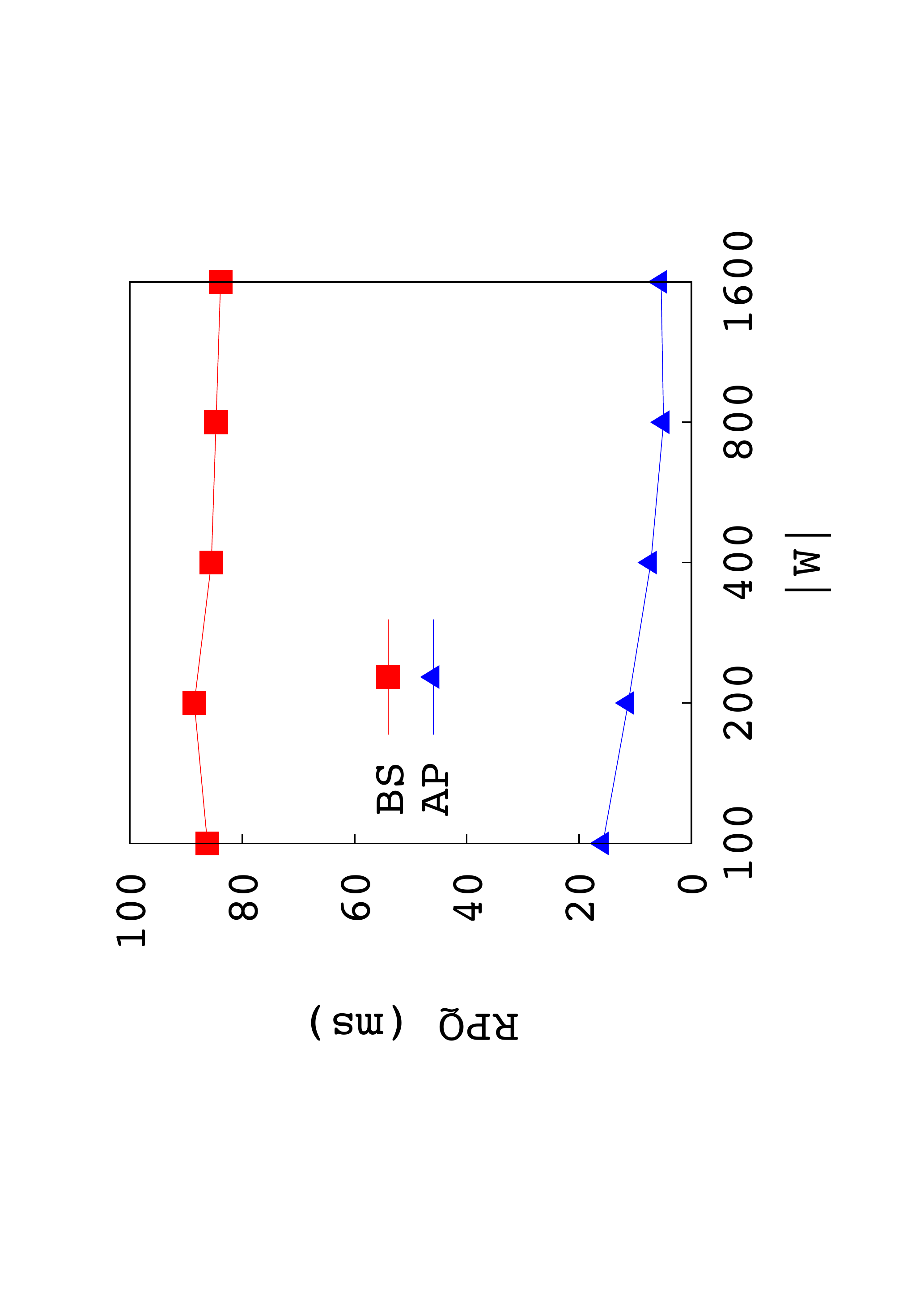}\label{fig:tw_foursq}}
\vspace{-8pt}
\caption{Effect of varying $|W|$ on \Foursq dataset}
\vspace{-8pt}
\label{fig:w_foursq}
\end{figure}

\begin{figure}[t]
\centering
\subfloat[Objects computed]{\includegraphics[angle =-90,trim = 20mm 60mm 20mm 50mm, clip,width=0.25\textwidth]{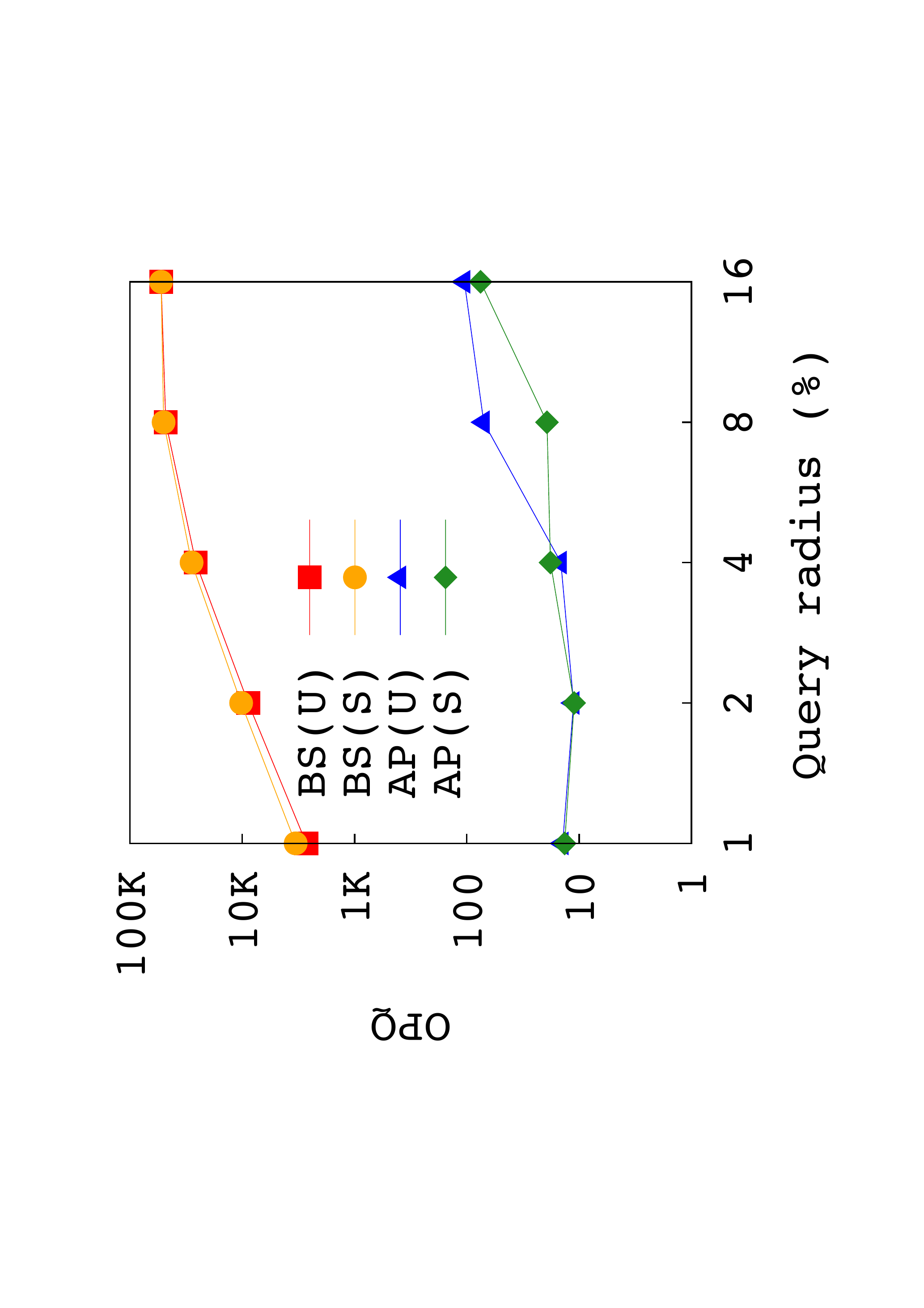}\label{fig:or}}
\subfloat[Runtime]{\includegraphics[angle =-90,trim = 20mm 60mm 20mm 50mm, clip,width=0.25\textwidth]{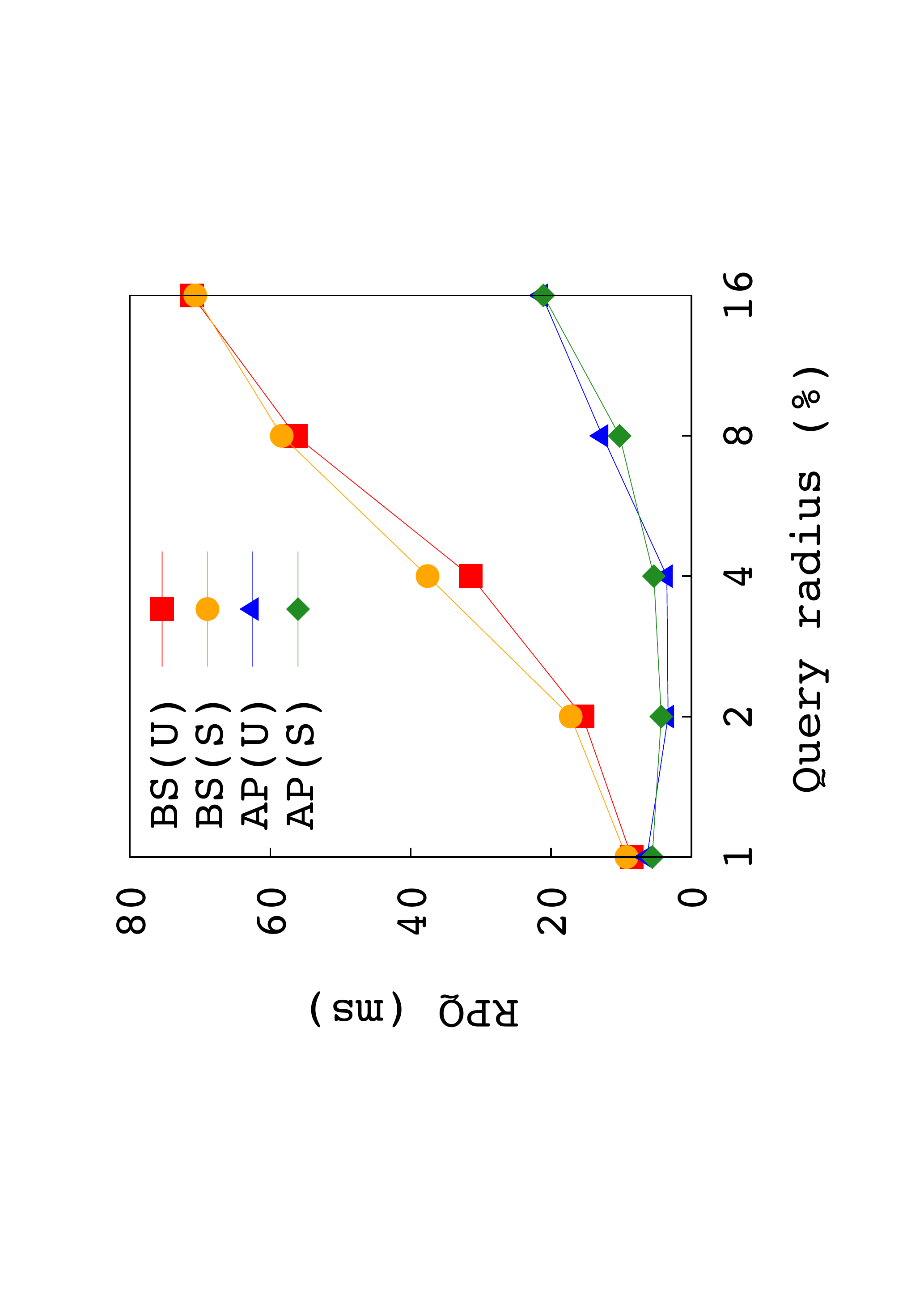}\label{fig:tr}}
\vspace{-8pt}
\caption{Effect of varying query radius on \Auss dataset}
\vspace{-8pt}
\label{fig:r}
\end{figure}

\begin{figure}[t]
\centering
\subfloat[Objects computed]{\includegraphics[angle =-90,trim = 20mm 48mm 20mm 55mm, clip,width=0.25\textwidth]{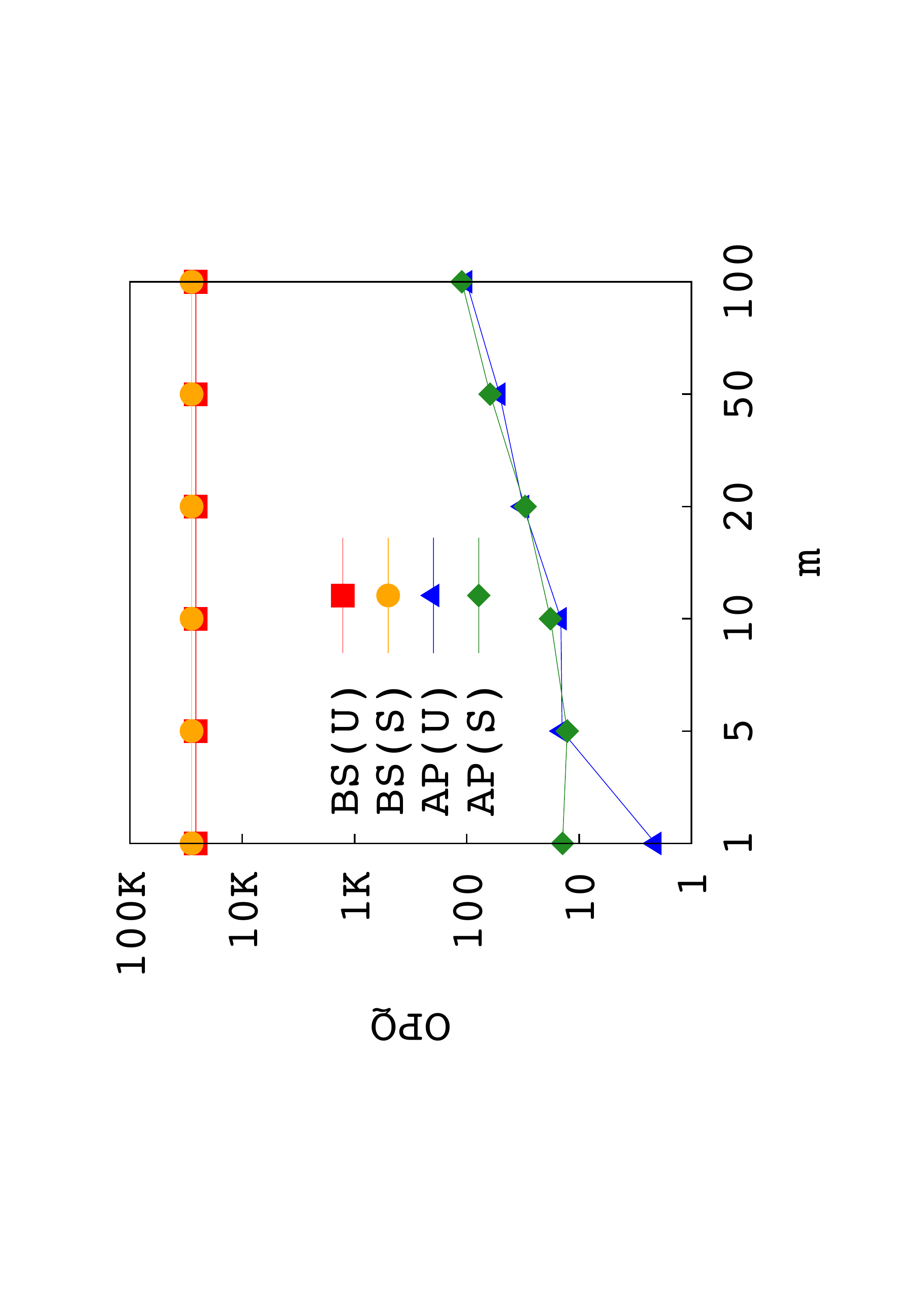}\label{fig:om}}
\subfloat[Runtime]{\includegraphics[angle =-90,trim = 20mm 48mm 20mm 55mm, clip,width=0.25\textwidth]{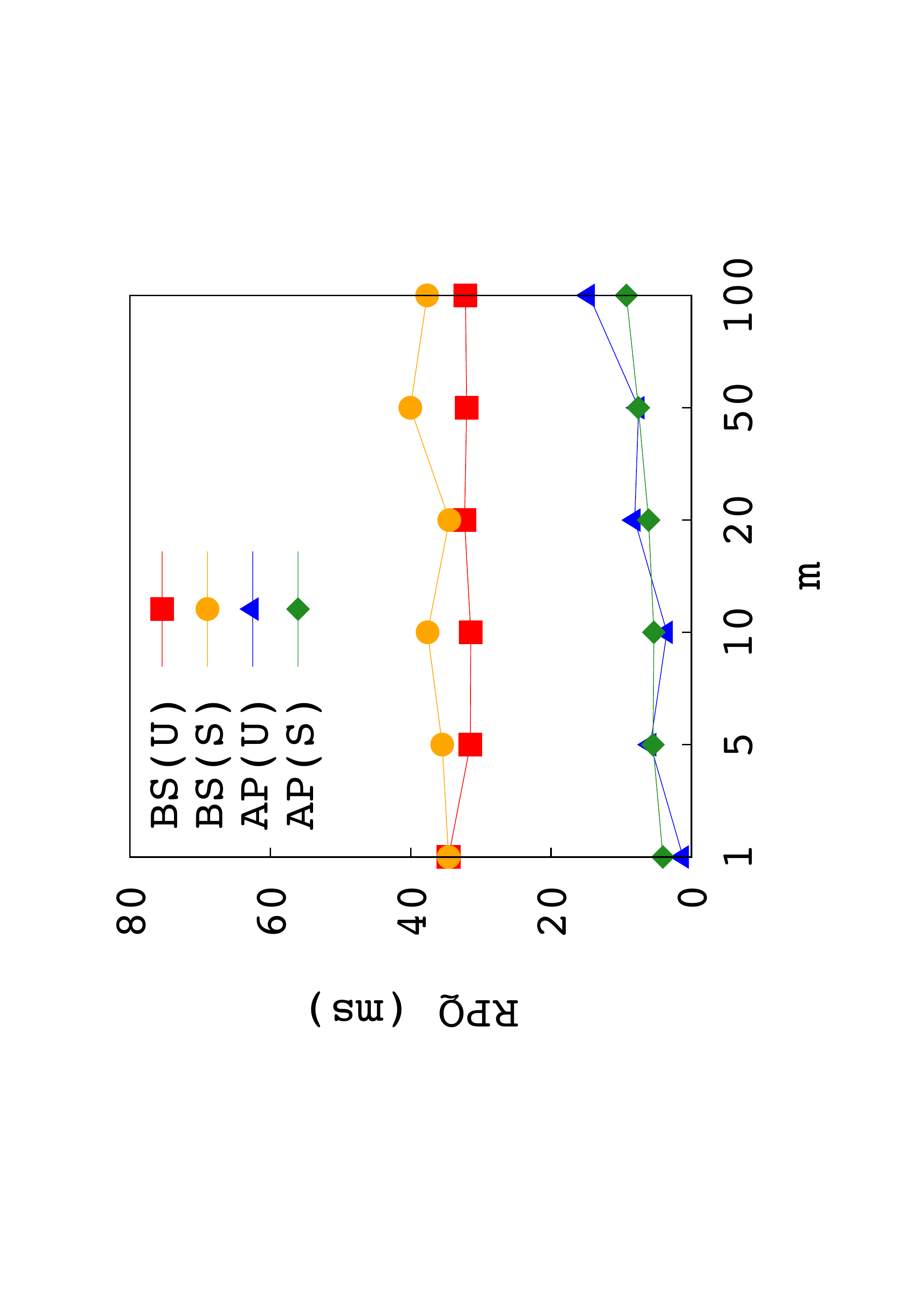}\label{fig:tm}}
\vspace{-8pt}
\caption{Effect of varying $m$ on \Auss dataset}
\vspace{-8pt}
\label{fig:m}
\end{figure}

\begin{figure}[t]
\centering
\subfloat[Objects computed]{\includegraphics[angle =-90,trim = 20mm 48mm 20mm 55mm, clip,width=0.25\textwidth]{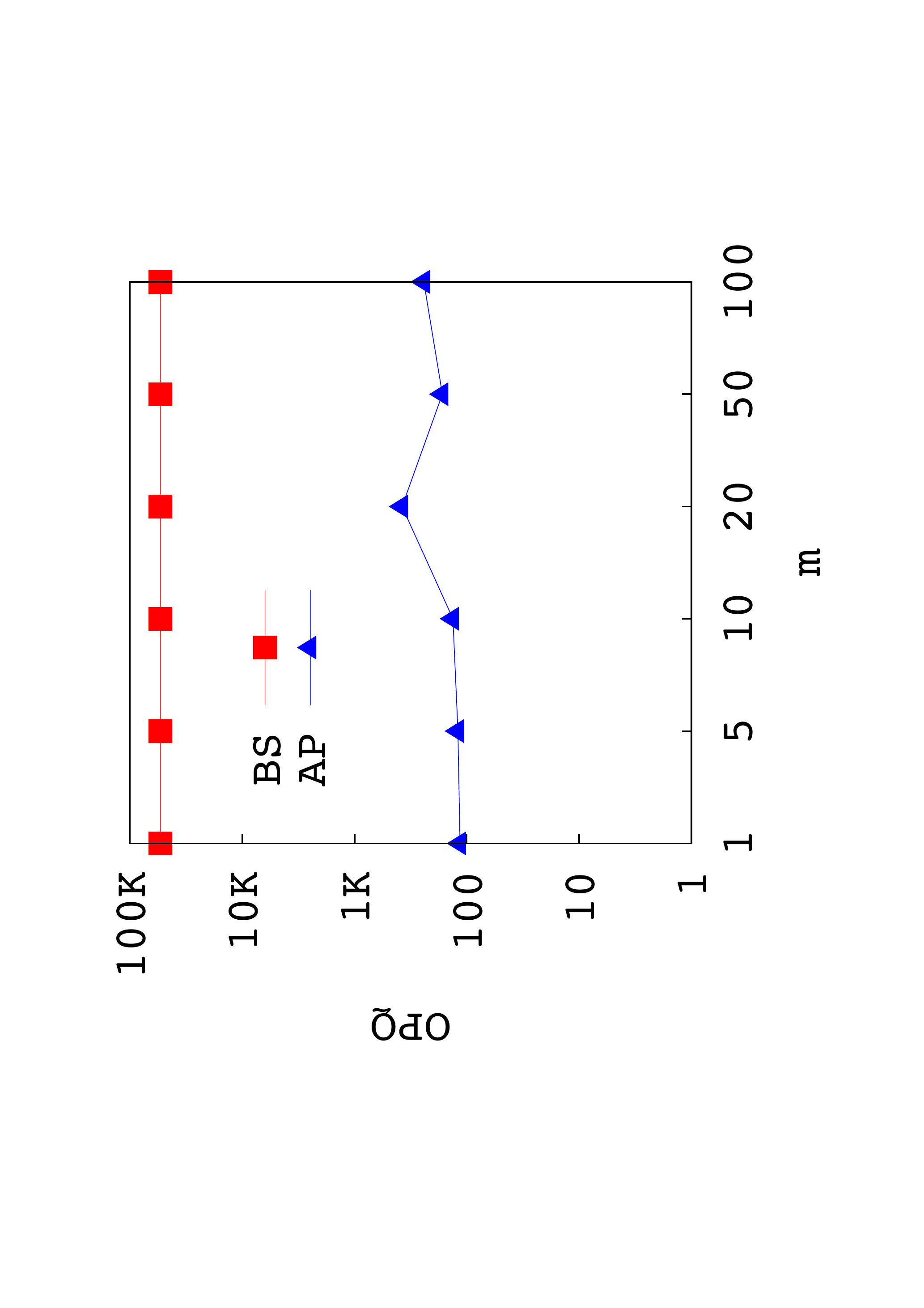}\label{fig:om_foursq}}
\subfloat[Runtime]{\includegraphics[angle =-90,trim = 20mm 48mm 20mm 55mm, clip,width=0.25\textwidth]{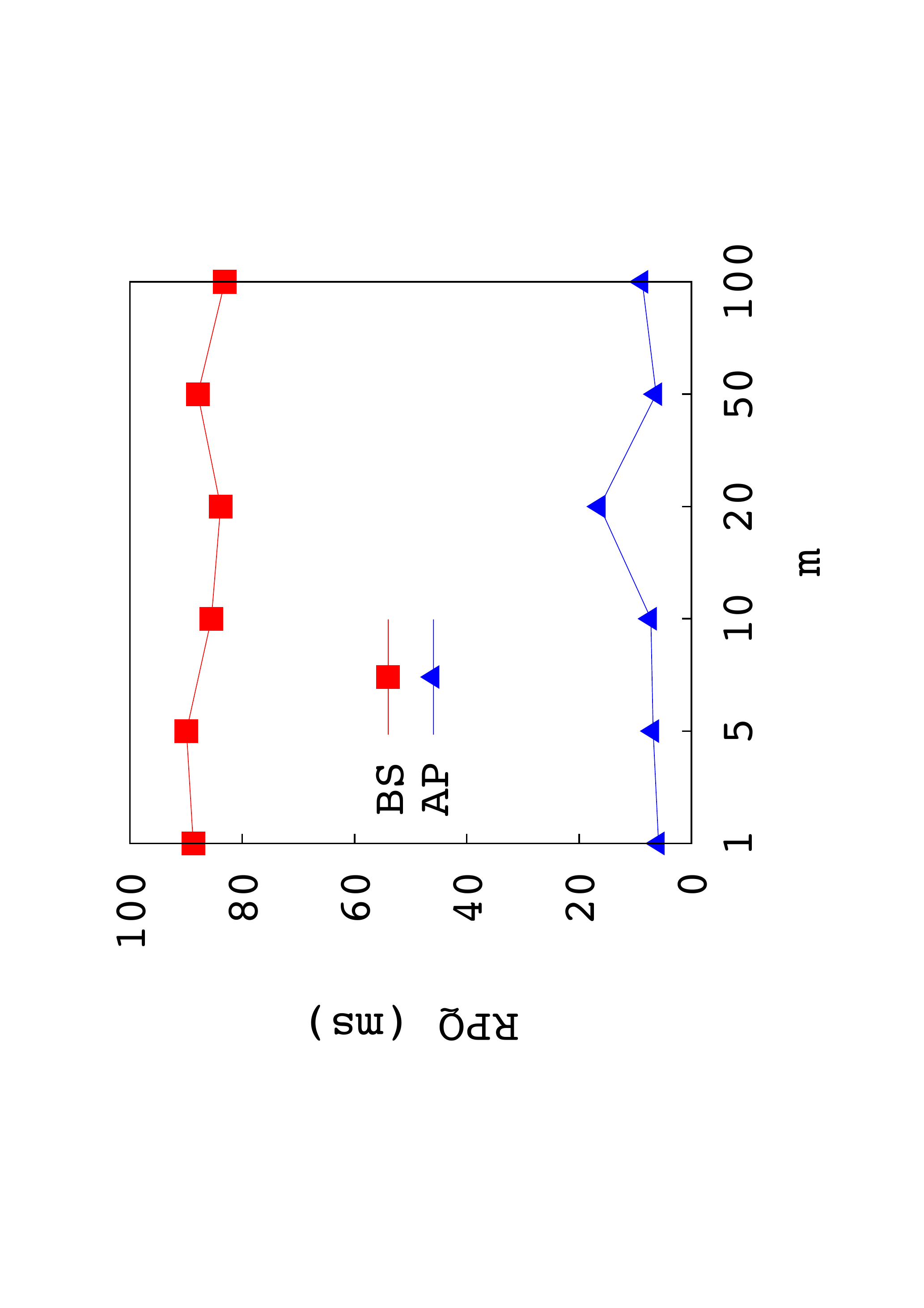}\label{fig:tm_foursq}}
\vspace{-8pt}
\caption{Effect of varying $m$ on \Foursq dataset}
\vspace{-8pt}
\label{fig:m_foursq}
\end{figure}

\begin{figure}[t]
\centering
\subfloat[Objects computed]{\includegraphics[angle =-90,trim = 20mm 48mm 20mm 55mm, clip,width=0.25\textwidth]{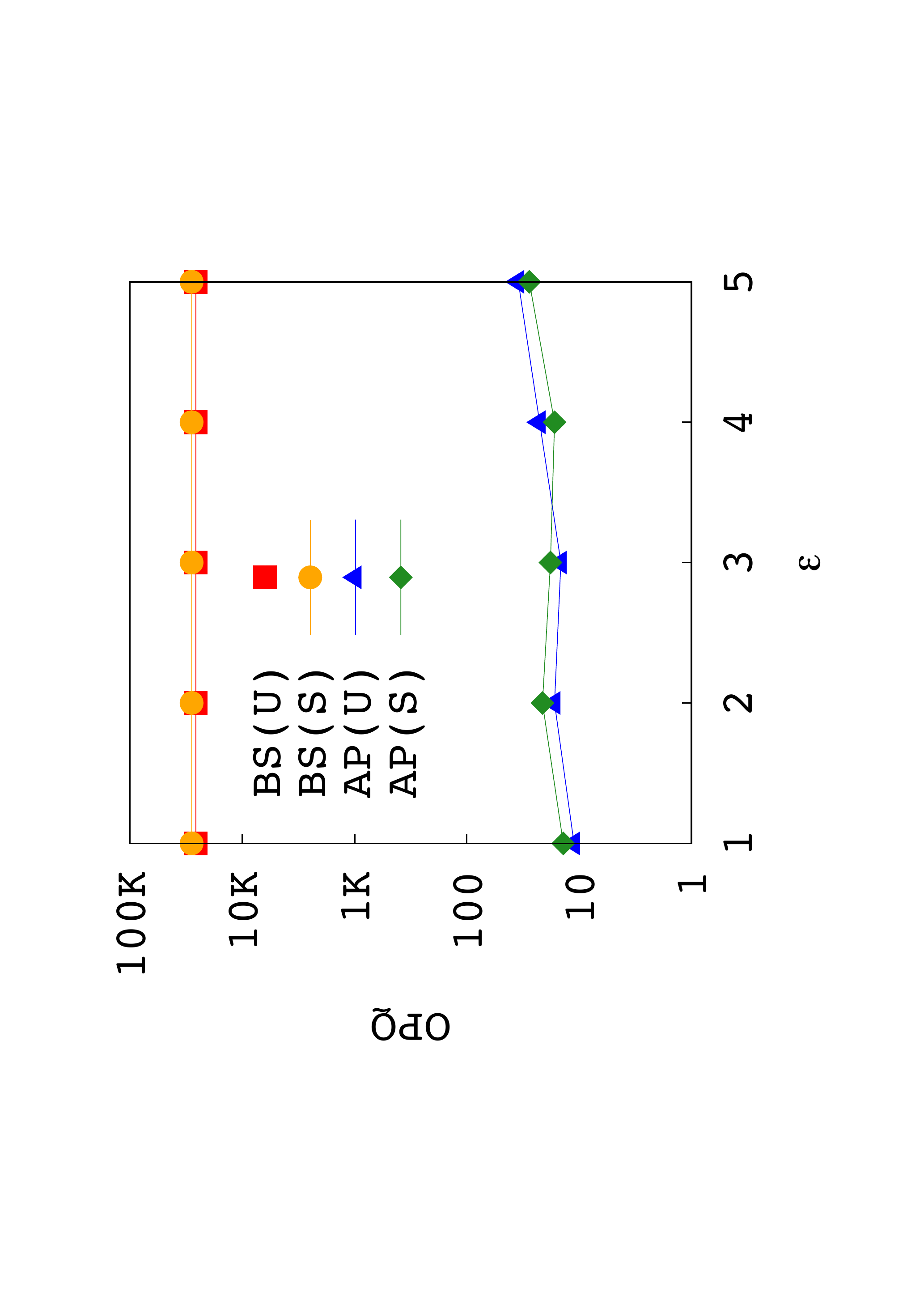}\label{fig:oep}}
\subfloat[Runtime]{\includegraphics[angle =-90,trim = 20mm 48mm 20mm 55mm, clip,width=0.25\textwidth]{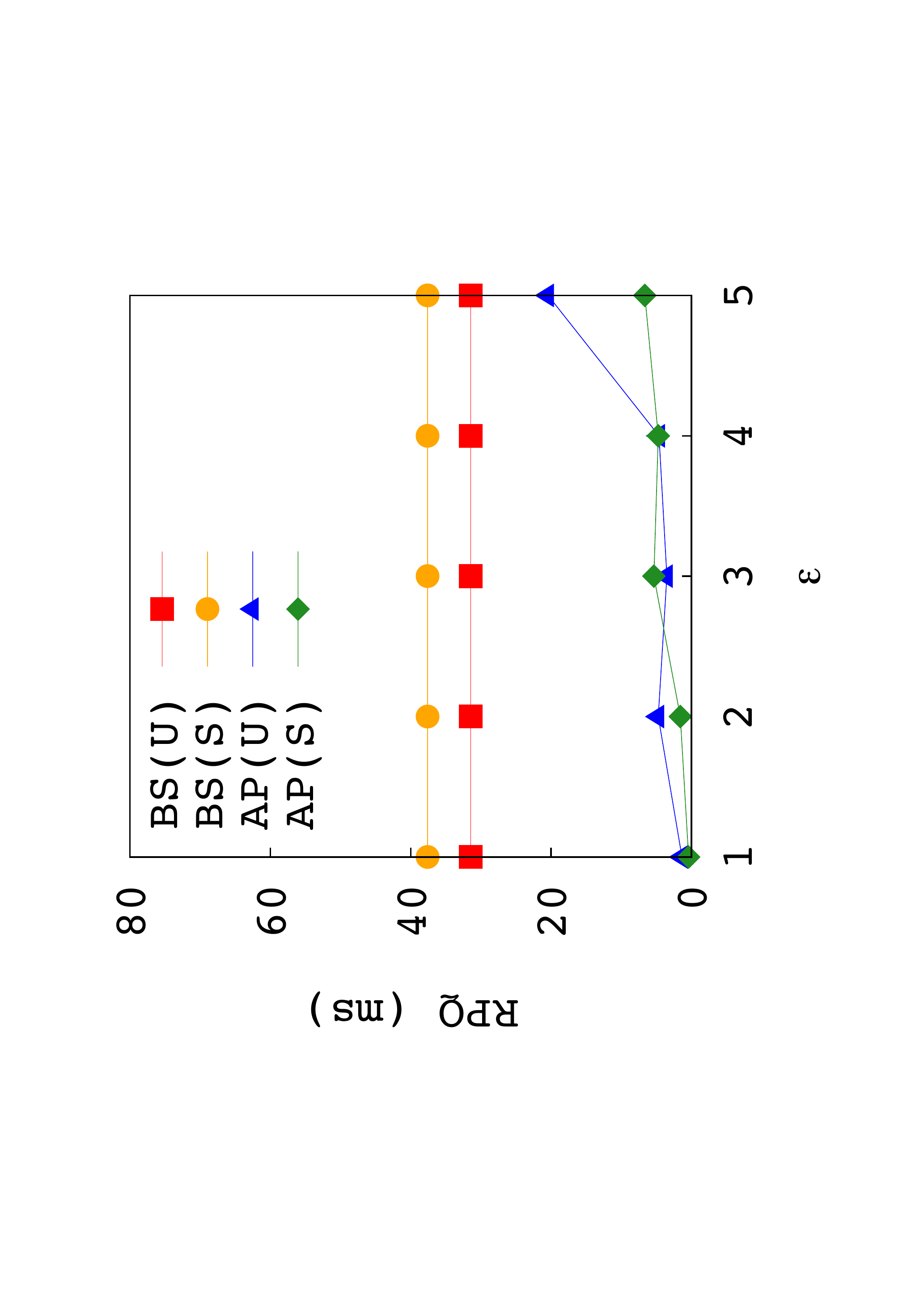}\label{fig:tep}}
\vspace{-8pt}
\caption{Effect of varying $\epsilon$ on \Auss dataset}
\vspace{-8pt}
\label{fig:epsilon}
\end{figure}

\begin{figure}[t]
\centering
\includegraphics[angle =-90,trim = 20mm 28mm 20mm 20mm, clip,width=0.32\textwidth]{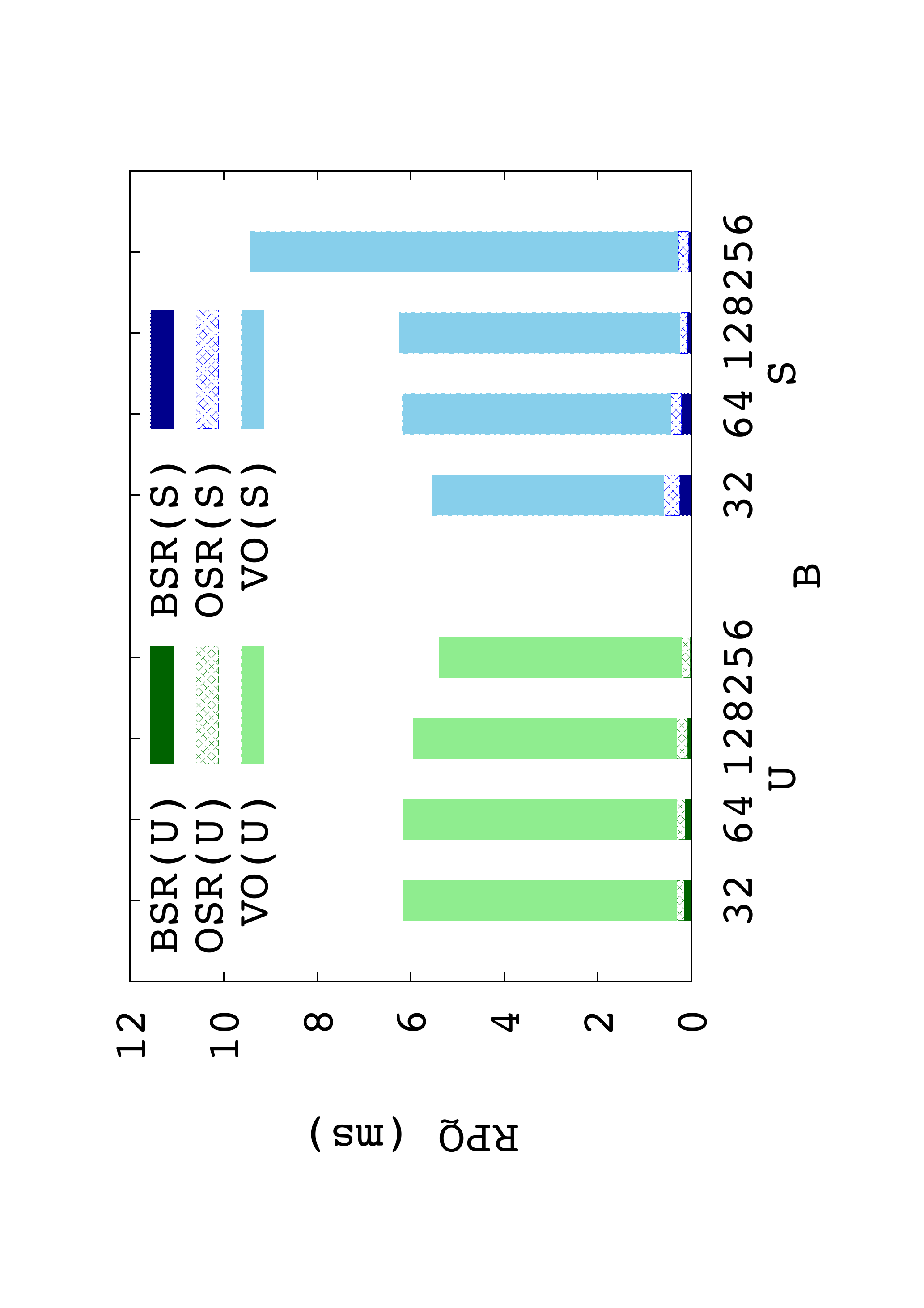}\label{fig:tb}
\vspace{-8pt}
\caption{Effect of varying $B$ on \Auss dataset}
\vspace{-8pt}
\label{fig:b}
\end{figure}

\begin{figure}[t]
\centering
\includegraphics[angle =-90,trim = 20mm 28mm 20mm 20mm, clip,width=0.32\textwidth]{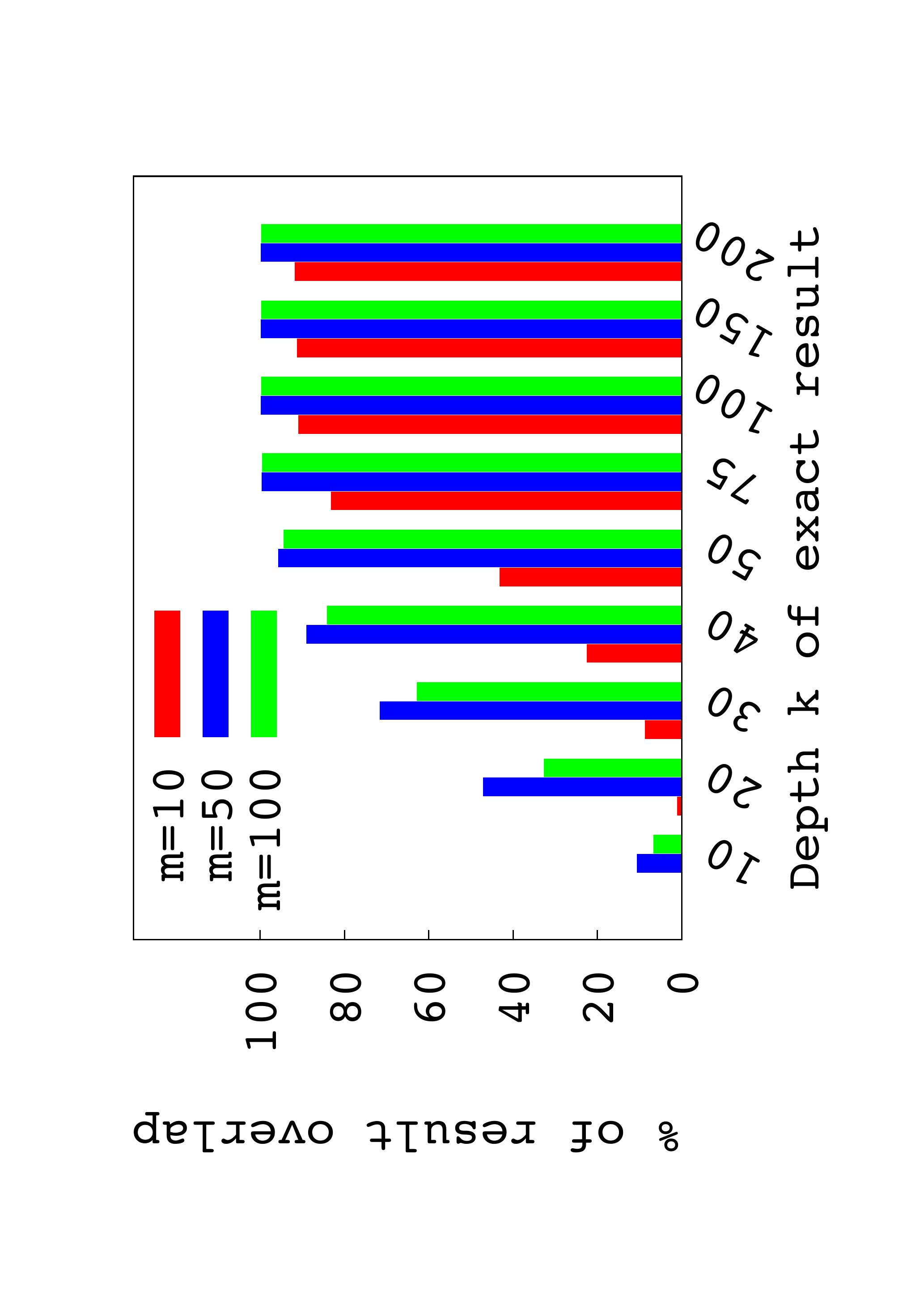}
\vspace{-8pt}
\caption{\% of result overlap for varying $m$ in \Auss dataset}
\vspace{-8pt}
\label{fig:overlap}
\end{figure}

\begin{figure}[!b]
\vspace{-12pt}
\centering
\includegraphics[angle =-90,trim = 20mm 28mm 20mm 20mm, clip,width=0.32\textwidth]{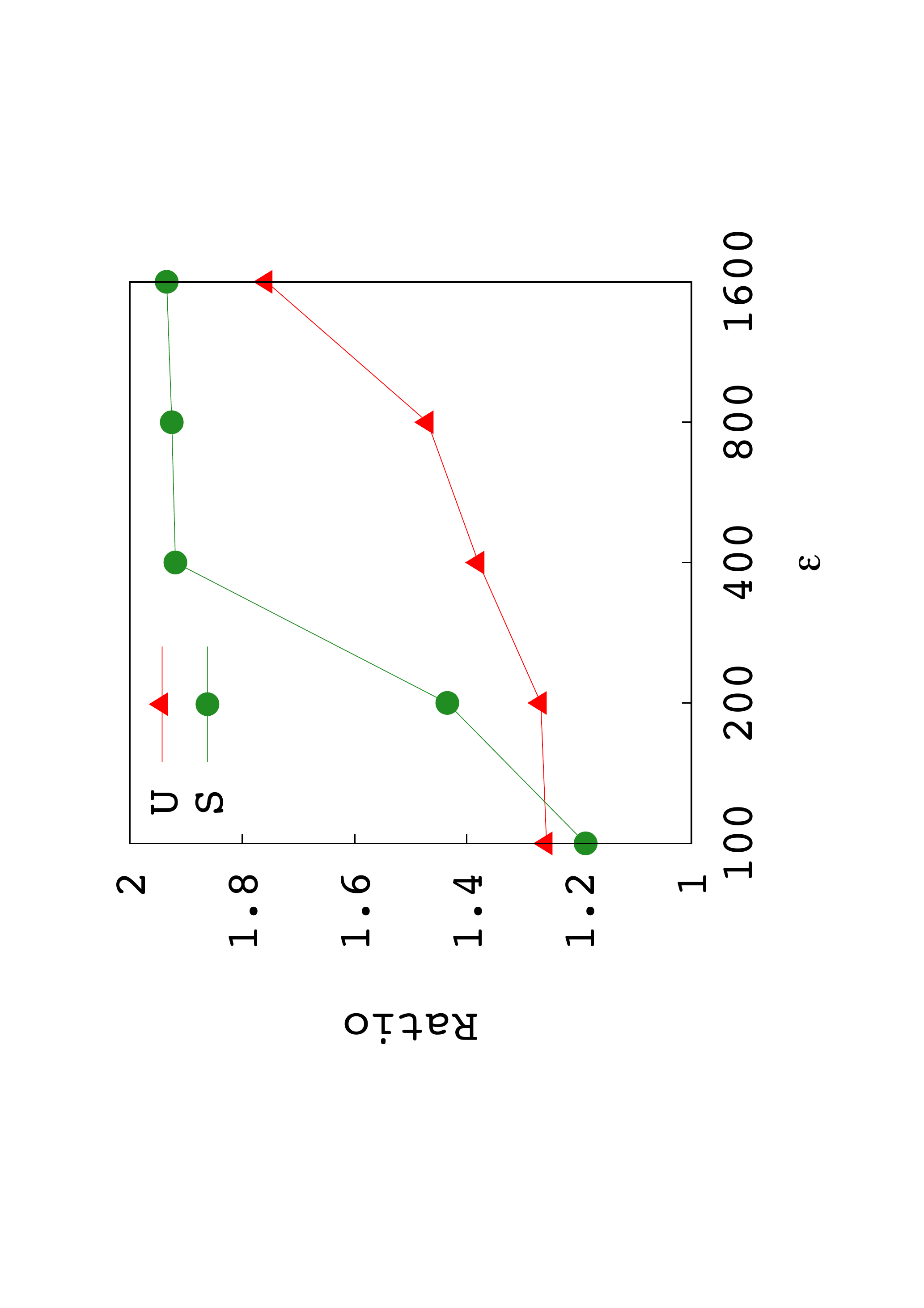}
\vspace{-8pt}
\caption{Index size vs. approximation ratio for varying $\epsilon$}
\vspace{-8pt}
\label{fig:epsilon_ratio}
\end{figure}

\myparagraph{Varying $|W|$}

Figure~\ref{fig:w} and Figure~\ref{fig:w_foursq} show the impact of
varying the number of queries in the sliding window, $|W|$, for {\Auss}
and {\Foursq}, respectively. For {\Auss}, the experiments were conducted using uniform and
skewed query sets, while the {\Foursq} query set is
derived directly from user check-ins.

For both datasets, the number of popularity computations required by
the approximate approach is about $3$ orders of magnitude less than
the baseline.
The reason is two-fold: (i) In the approximate approach, we compute
the popularity of only the objects necessary to update the result.
If the result objects of the previous window are found as valid, we
do not need to compute the popularity of any additional object.
In contrast, the baseline solution must update the popularity for 
all of the objects that satisfy the query constraint.
(ii) Since the popularity function is an average aggregation (see
Sec.~\ref{sec:ps}), the popularity of an object usually does not
change drastically as $|W|$ increases.
Therefore, the result objects in a window are more likely to stay
valid in subsequent windows for larger values of $|W|$, thereby
requiring even fewer objects being checked.
As shown in Figure~\ref{fig:tw} and Figure~\ref{fig:tw_foursq}, fewer
popularity computation directly translates to lower running time.

In {\Auss}, the performance in both uniform and skewed query sets
improves $|W|$ increases, but drops slightly from $|W| =
800$ to $|W| = 1600$ for the approximate approach.
The reason is that, if the results are \emph{not valid} for
a window, we need to look in the validation objects, which is a
subset of the objects that satisfy the constraint of at least one
query in the current window.
So although the results update less often for larger $|W|$, an
update in the results may require checking more objects for a larger
$|W|$.

\myparagraph{Varying query range}

Figure~\ref{fig:r} shows the performance when varying the radius of
each query as a percentage of the dataspace.
We vary the query radius only for {\Auss}, as we use the radius that
covers the check-in locations of a user as the query radius in 
the {\Foursq} dataset.
Here, the number of objects that fall into the query range grows
as query radius increases.
Therefore, the performance of the baseline declines rapidly when the
query radius increases.
In contrast, the approximate approach computes the popularity of only
the objects that can be a result, which is a subset of the objects
that fall within the query range.
Thus, the approximate approach outperforms the baseline, and the
benefit is more significant as the query radius increases.

\myparagraph{Varying $m$}
The experimental results when varying the number of result objects,
$m$, are shown in Figure~\ref{fig:m} and Figure~\ref{fig:m_foursq}
for {\Auss} and {\Foursq}, respectively.
Here, the performance of the baseline does not vary much, as the
baseline computes the popularity for all of the objects that fall
within the query range regardless of the value of $m$.
The approximate approach outperforms the baseline, because the
approximate approach considers only the objects that can potentially
be in the top-$m$ results.
As more objects qualify to be a result, the performance of the
approximate approach decreases with the increase of $m$.

\myparagraph{Varying $\epsilon$}
Figure~\ref{fig:epsilon} shows the performance of the approaches when 
varying the approximation parameter $\epsilon$ for \Auss dataset.
The approximate approach consistently outperforms the baseline for
all choices of $\epsilon$.
As the rank of an object is more accurately approximated for a
smaller value of $\epsilon$, it leads to checking fewer number of
objects and a lower runtime.
As a result, the performance of the approximate approach gradually
decreases with the increase of $\epsilon$.

\myparagraph{Varying $B$}
We vary the block size of the rank lists as the parameter $B$,
and measure the performance.
We find that the number of objects to check does not vary with $B$,
because, if the result of a window needs to be updated, the same set
of validation objects are retrieved regardless of the rank list block size
Therefore, we only show the runtime for varying $B$ in
Figure~\ref{fig:b}.
For each $B$, the total runtime is shown as a breakdown of the
computation time for (i) block-level safe rank, (ii) object-level
safe rank, and (iii) validation object computation for both uniform
and skewed query sets.
From Figure~\ref{fig:b} we can conclude that: (1) as the total number
of blocks decreases for higher $B$, the time required to compute the
block-level safe rank also decreases; and (2) the validation object
lookups dominate the computational costs of the approximate solution.

\subsection{Effectiveness Evaluation}

\myparagraph{Varying $m$}

As shown in Figure~\ref{fig:dataset1}, the query locations originally
follow a skewed distribution, and most of the query locations are
clustered in a small area (which is the central business district of
that city), while the rest of the queries are scattered regionally for \Auss dataset.
In the uniform query set, the queries are repeated uniformly, thus
the upsized query set also follows the same (skewed) distribution of
the original query set.
For this reason, we evaluated our effectiveness as a percentage of
result overlap when using the uniformly upsized query set to capture
a more realistic scenario.

The percentage of result overlap between the top-$m$ approximate
results and the top-$k$ exact results for \Auss dataset are shown in
Figure~\ref{fig:overlap}, where $k$ ranges from $10$ to $200$ and we
set three choices of $m$ ($10$, $50$, $100$).
We find that as $k$ increases, the overlap percentage also increases.
For $m=50$ and $100$, the overlap percentage quickly reaches $90\%$ when
$k=50$.
Note that, if multiple objects have the same popularity value, we
treat their rank position in the result as equivalent.

More experiment results on the percentage of result overlap for the \Foursq dataset and the approximation ratio for both datasets for the varying $m$ can be found in Appendix~\ref{appendix:exp}.



\myparagraph{Varying query range}  Please refer to Appendix~\ref{appendix:exp} for the approximation ratio w.r.t varying query ranges.  


\myparagraph{Varying $\epsilon$, space vs. effectiveness tradeoff}
Figure~\ref{fig:epsilon_ratio} shows the tradeoff between the space requirement and the effectiveness in terms of approximation ratio for varying $\epsilon$. Here, the x-axis represents the index size in GB for both datasets, where $\epsilon$ is varied from $1$ to $5$ at an interval of $1$.  
Since the approximate popularity of an object becomes closer to the
exact popularity as $\epsilon$ decreases, the approximation ratio
also improves for smaller $\epsilon$.

\section{Conclusion} \label{sec:conclusion}
In this paper, we presented the problem of top-$m$ rank aggregation of
spatial objects for streaming queries.
We showed how to bound the rank of an object for any unseen query,
and then proposed an exact solution for the problem.
We then proposed an approximate solution with a guaranteed
error bound, in which used {\em safe ranking}
to determine whether the current result is still valid or not when
new queries arrive, and {\em validation objects}
to limit the number of objects to update in the top-$m$ results.
We conducted a series of experiments on two real datasets, and
show that the approximate approach is about $3$ orders of
magnitude efficient than the exact solution on the collections, and the
results returned by the approximate approach have more than a $90\%$
overlap with the exact solution for $m$ higher than $50$.
Our work combines three important problem domains (rank aggregation,
continuous queries and spatial databases) into a single context.
In future work, we intend to continue exploring other spatial
query constraints and rank aggregation functions using our framework
in order to better understand how the interplay between these three
important domains can be leveraged to solve other cross-disciplinary
problems.

%

\begin{small}
\balance
\setlength{\bibsep}{0.0pt}
\bibliographystyle{abbrvnat}
\bibliography{ref_all}
\end{small}

%
%

\appendix
\section{Proof of approximation error bound} \label{appendix:proof}
 \begin{proof}
Here, $\ar(o,q)$ is the average value of the $\LR(o,c)$ and $\UR(o,c) =
(1+\epsilon) \times \LR(o,c)$, where $c$ is the cell that contains
$q$.
Therefore, the difference between $\rank(o,q)$ and $\ar(o,q)$ is
maximum when $\rank(o,q) = \LR(o,c)$ or $\rank(o,q) = \UR(o,c)$.

From Equation~\ref{eqn:ap}, the difference between the exact and the
approximate popularity computation of an object $o$ is derived from substituting the $\rank(o,q)$ by $\ar(o,q)$ for each query $q$ in $W$.
If $o$ does not satisfy $\constraint(q)$, the contribution to the
popularity for $q$ is $0$ for both cases.
Therefore, the difference between $\rho(o,W)$ and $\ap(o,W)$ is
maximum when either (i) $\rank(o,q_i) = \LR(o,c_i)$, or (ii)
$\rank(o,q_i) = \UR(o,c_i)$ for each $q_i$ in $W$.
We denote $\lambda_i = {\sum_{i=1}^{|W|}} \LR(o,c_i)$ for ease of presentation.

\noi (i) If $\rank(o,q_i) = \LR(o,c_i)$ for each $q_i$ in $W$, then 
\begin{align*}
(1)\D \rho(o,W) &= \frac{\mathlarger{\sum_{i=1}^{|W|}} N-\LR(o,c_i)+1}{|W|}, \mbox{ and } \\ 
(2)\D \ap(o,W) &= \frac{\mathlarger{\sum_{i=1}^{|W|}} N- (1+\epsilon/2) \times \LR(o,c_i) +1}{|W|} \mbox{ (from Eqn.~\ref{eqn:ar}).}
\end{align*}

\begin{align*}
 \rho(o,W) - \ap(o,W)  &=  \frac{ \mathlarger{\sum_{i=1}^{|W|}} - \LR(o,c_i) + (1+\epsilon/2) \LR(o,c_i)}  {\displaystyle |W|} \\
&= \epsilon/2 \times \frac{\displaystyle \lambda_i} {\displaystyle |W|}\\
\frac{\displaystyle \rho(o,W)}{\displaystyle \rho(o,W) - \ap(o,W)} &=  \frac{\mathlarger{\sum_{i=1}^{|W|}} N- \LR(o,c_i)+1} {\displaystyle \epsilon/2 \times \lambda_i}\\
&= \frac{\displaystyle W \times N + W - \lambda_i} {\displaystyle \epsilon/2 \times \lambda_i}
\end{align*}

\noi Here, $\LR(o,c)$ is the lower bound rank estimation, and the
rank of an object is between $[1,N]$, hence, $W \le \lambda_i \le W \times N$.
Therefore, the value of the nominator $W \times N + W - \lambda_i$ is also
between $[W,W \times N]$. 
So by setting the lowest value of $\lambda_i$ in the equation, we get
the following inequality, 

\begin{align*}
\frac{\displaystyle \rho(o,W)}{\displaystyle \rho(o,W) - \ap(o,W)} &\le \frac{\displaystyle W \times N + W - W} {\displaystyle \epsilon/2 \times W} \\
&\le \frac{\displaystyle N} {\displaystyle \epsilon/2 }\\
&\le \frac{\displaystyle 2N} {\displaystyle \epsilon }\\
\Rightarrow \frac{\displaystyle \rho(o,W) - \ap(o,W)}{\displaystyle
\rho(o,W)} &\ge \frac{\displaystyle \epsilon }{\displaystyle 2N },
\mbox{ (by taking the inverse)}\\
\Rightarrow  1- \frac{\displaystyle \ap(o,W)}{\displaystyle \rho(o,W)} &\ge \frac{\displaystyle \epsilon }{\displaystyle 2N }\\
\Rightarrow  \frac{\displaystyle \ap(o,W)}{\displaystyle \rho(o,W)} &\le 1 - \frac{\displaystyle \epsilon }{\displaystyle 2N }
\end{align*}

\noi (ii) If $\rank(o,q_i) = \UR(o,c_i)$ for each $q_i$ in $W$, then 
\begin{align*}
(1)\D \rho(o,W) &= \frac{ \mathlarger{\sum_{i=1}^{|W|}} N- (1+ \epsilon) \times \LR(o,c_i)+1}{|W|} \mbox {, and } \\
(2)\D \ap(o,W)  &= \frac{ \mathlarger{\sum_{i=1}^{|W|}} N- (1+\epsilon/2) \times \LR(o,c_i) +1}{|W|} \mbox{ (from Eqn.~\ref{eqn:ar}). }
\end{align*}

\begin{align*}
 \ap(o,W) - \rho(o,W) &=  \frac{ \mathlarger{\sum_{i=1}^{|W|}} - (1+\epsilon/2) \times \LR(o,c_i) +  (1+ \epsilon) \times \LR(o,c_i)}  {\displaystyle |W|} \\
&= \epsilon/2 \times \frac{\displaystyle \lambda_i}   {\displaystyle |W|}\\
\frac{\displaystyle \ap(o,W)}{\displaystyle \ap(o,W) - \rho(o,W)} &=  \frac{\mathlarger{\sum_{i=1}^{|W|}} N- (1+ \epsilon/2) \times \LR(o,c_i)+1} {\displaystyle \epsilon/2 \times \lambda_i}\\
&= \frac{\displaystyle W \times N + W - (1+ \epsilon/2) \times \lambda_i} {\displaystyle \epsilon/2 \times \lambda_i}
\end{align*}

Setting the lowest value of $\lambda_i = W$ in the equation produces the
following inequality,
\begin{align*}
\frac{\displaystyle \ap(o,W)}{\displaystyle \ap(o,W) - \rho(o,W)} &\le \frac{\displaystyle W \times N + W - (1+ \epsilon/2) \times W} {\displaystyle \epsilon/2 \times W}\\
&\le \frac{\displaystyle N - \epsilon/2} {\displaystyle \epsilon/2 }\\
\Rightarrow  \frac{\displaystyle \ap(o,W) - \rho(o,W)}{\displaystyle \ap(o,W)} &\ge  \frac {\displaystyle \epsilon/2 }{\displaystyle N - \epsilon/2}, \mbox{ (by taking the inverse)}
\end{align*}

\noi Since the value $\epsilon$ is between $[0,N-1]$, the denominator $N - \epsilon/2$ can be a maximum of $N$. 
\begin{align*}
1- \frac{\displaystyle  \rho(o,W)}{\displaystyle \ap(o,W)} &\ge \frac {\displaystyle \epsilon/2 }{\displaystyle N - \epsilon/2} \ge \frac {\displaystyle \epsilon/2 }{\displaystyle N}\\
\Rightarrow \frac{\displaystyle  \rho(o,W)}{\displaystyle \ap(o,W)}  &\le 1-  \frac{\displaystyle \epsilon }{\displaystyle 2N }
\end{align*}
\end{proof}

\section{Additional experiment results} \label{appendix:exp}
In this section, we show additional experiment results on our effectiveness study. 
Recall the experiment setting, although the primary purpose of the real country-level \Foursq is to test out the scalability of our approximate and exact solution on real data\footnote{Note that, in reality one seldom issues a spatial range query while the candidates are objects spread over the whole big country}, we also report its effectiveness results for the completeness of experiments.

\myparagraph{Varying $|W|$}
Table~\ref{table:ratio_w} shows the average approximation ratio for
both datasets.
Although the average approximation ratio gradually improves for both
uniform and skewed query sets as $|W|$ increases, the change
does not follow any obvious pattern.
The explanation for this random behaviour is that popularity is an
average aggregation of $|W|$ ranks, so if both the exact and the
approximate popularity do not change at the same rate with $|W|$,
their ratios do not change in a fixed way.

\myparagraph{Varying $m$}
Figure~\ref{fig:overlap_foursq} shows the percentage of overlap between the top-$m$ approximate results and the top-$k$ exact results for {\Foursq} dataset. Although the overlap becomes close to 100\% for higher $m$, the overlap is not as good as the \Auss dataset for lower values of $m$. The reason is as follows. As shown in Figure~\ref{fig:dataset} the objects in \Foursq are clustered into cities, and the cities are scattered in different parts of the USA. On the other hand, the query locations are distributed all over the dataspace, as a user can check-in at different cities. Therefore, the popularity values of most of the objects in a city are very close to each other. Figure~\ref{fig:screenshot} shows a screenshot of the top-$10$ popularities computed in the baseline approach at three example instances. As we can see, the final rank of two objects can be very far away for a slight difference in their popularity values; for example, in the first example instance, the difference between every adjacent objects' popularity score is only 0.25 in average while the absolute values are at the scale of 50K.

\begin{table}[h]
\centering
\caption{Approximation ratio for varying $|W|$}
\label{table:ratio_w}
\begin{small}
\begin{tabular}{|c|c|c|c|c|c|c|}
\hline

\multicolumn{2}{|c|}{\backslashbox{Dataset}{$|W|$}} & 100 & 200 & 400 & 800 & 1600\\
\hline
\multirow{2}{*}{\Auss} & $U$ & 2.12 & 1.60 & 1.57 & 1.67 & 1.55\\
\cline{2-7}
& $S$ & 3.19 & 1.55 & 2.14 & 1.30 & 1.34\\
\hline
\multicolumn{2}{|c|}{\Foursq}  & 2.76 & 6.87 & 3.49  & 3.15 & 2.33  \\

\hline

\end{tabular}
\end{small}
\end{table}

\begin{table}[h]
\centering
\caption{Approximation ratio for varying $m$}
\label{table:ratio_m}
\begin{small}
\begin{tabular}{|c|c|c|c|c|c|c|c|}
\hline

\multicolumn{2}{|c|}{\backslashbox{Dataset}{$m$}} & 1 & 5 & 10 & 20 & 50& 100\\
\hline
\multirow{2}{*}{\Auss} & $U$ & 3.00 & 4.79 & 1.57 & 1.56 & 1.49 & 1.49\\
\cline{2-7}
& $S$ & 5.61 & 2.03 & 2.14 & 1.60 & 1.17 & 1.16\\
\hline
\multicolumn{2}{|c|}{\Foursq}  &  1.57 & 2.62 & 2.68  & 3.31 & 3.37 & 2.47 \\

\hline

\end{tabular}
\end{small}
\end{table}

\begin{figure}[h]
\centering
\includegraphics[width=0.25\textwidth]{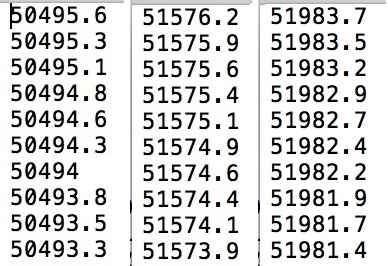}
\caption{Popularity values of top-$10$ objects in \Foursq dataset}
\label{fig:screenshot}
\end{figure}

\begin{figure}[h]
\centering
\includegraphics[angle =-90,trim = 20mm 28mm 20mm 20mm, clip,width=0.40\textwidth]{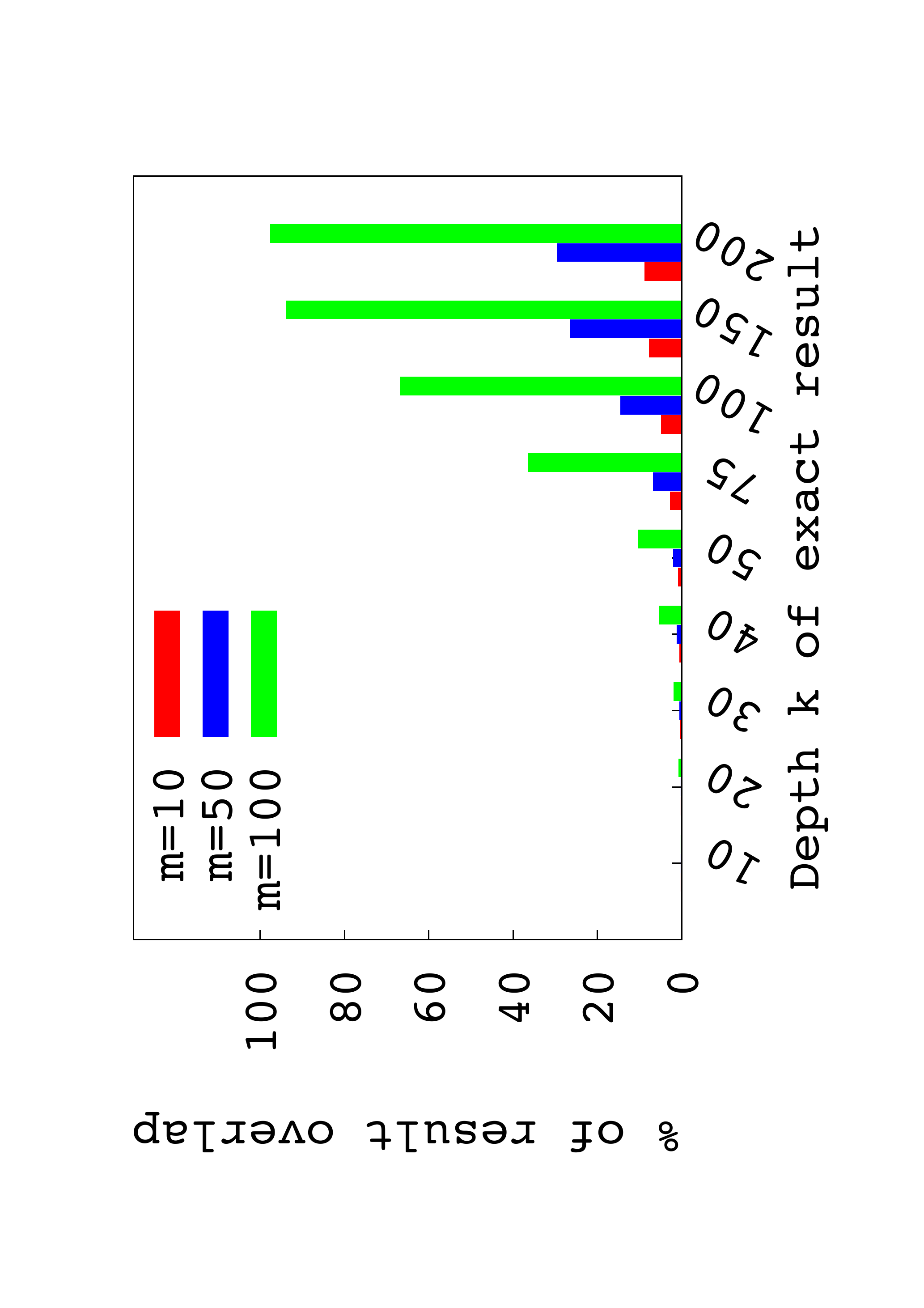}
\caption{\% of result overlap for varying $m$ in \Foursq dataset}
\label{fig:overlap_foursq}
\end{figure}


Table~\ref{table:ratio_m} shows the approximation ratio for varying $m$ for \Auss dataset. As shown in the table, the approximation ratio keeps improving with the increase of $m$, probably because most objects in the top-$m$ ranked list have very similar scores in both their approximate popularity and approximate popularity when $m<100$. the top-$m$ ranked list are very close to each have very similar exact and approximate popularity scores.

\myparagraph{Approximation ratio for varying query range}
\begin{table}[h]
\centering
\caption{Approximation ratio for varying query radius}
\label{table:ratio_r}
\begin{small}
\begin{tabular}{|c|c|c|c|c|c|c|}
\hline

\multicolumn{2}{|c|}{\backslashbox{Dataset}{Query radius}} & 1 & 2 & 4 & 8 & 16\\
\hline
\multirow{2}{*}{\Auss} & $U$ & 2.55 & 1.55 & 1.57 & 1.61 & 1.63\\
\cline{2-7}
& $S$ & 3.32 & 2.88 & 2.14 & 2.39 & 3.55\\

\hline

\end{tabular}
\end{small}
\end{table} 

The approximation ratio of the results w.r.t. varying query ranges are shown in Table~\ref{table:ratio_r}. We find that the approximation ratio does not indicate any significant pattern for this parameter, because the approximation calculation does not depend on the query radius or the number of5objects falling within that range.

\balancecolumns

\end{document}